\DeclareMathOperator{\mad}{mad}
\DeclareMathOperator{\ad}{ad}
\DeclareMathOperator{\ch}{ch}
\newtheorem{theorem}{Theorem}
\newtheorem{lemma}[theorem]{Lemma}
\newtheorem{claim}{Claim}
\def\claimb{\vcenter\bgroup\advance\hsize by -8em\noindent
\refstepcounter{claimb}\ignorespaces\it}
\def\endclaimb{\rm\egroup\leqno(\theclaim)\global\@ignoretrue}
\newenvironment{claimproof}[1][]%
    {\noindent \emph{Proof.} {}{#1}{}}{\hfill
    $\Diamond$\vspace{1em}}
\newcounter{rulecnt}
\newcounter{confcnt}
\newcommand{\anonymous}[1]{#1}
\def\ch{\text{ch}}
\def\dD{d^\Delta}
\def\free{\bar{\pi}}
\newcommand{\SmallPicture}[2]{%
  \includegraphics[height=#2]{#1}%
  \xspace
}
\newcommand{\dice}{\smash{\SmallPicture{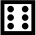}{0.8em}}}
\renewcommand{\Pr}[1]{\operatorname{Pr}\mathopen{}\left( #1 \right)\mathclose{}}
\newcommand{\Expect}[1]{\operatorname*{E}\mathopen{}\left( #1 \right)\mathclose{}}
\begin{document}

\title{Edge-coloring sparse graphs with $\Delta$ colors in quasilinear time}

\anonymous{
\date{\today}

\author{{\L}ukasz Kowalik\thanks{Institute of Informatics, University of Warsaw, Poland (\texttt{kowalik@mimuw.edu.pl}). This work is a part of project BOBR that has received funding from the European Research Council (ERC) under the European Union’s Horizon 2020 research and innovation programme (grant agreement No. 948057).}}
}

\maketitle

\anonymous{
\begin{textblock}{20}(-1.8, 8.3)
	\includegraphics[width=40px]{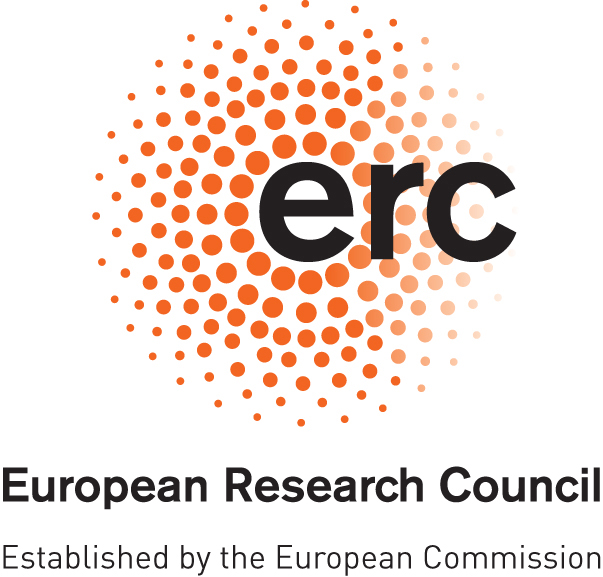}%
\end{textblock}
\begin{textblock}{20}(-2.05, 8.6)
	\includegraphics[width=60px]{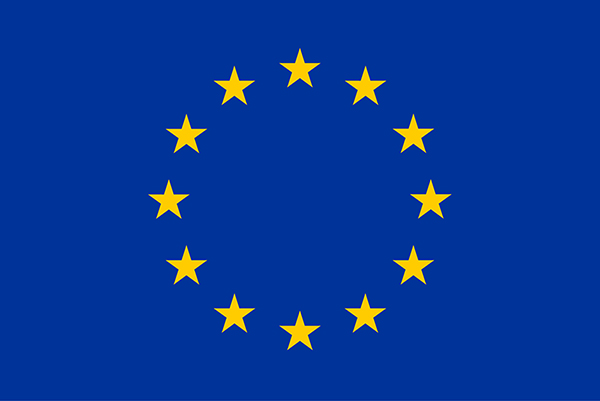}%
\end{textblock}
}

\begin{abstract}
In this paper we show that every graph $G$ of bounded maximum average degree $\mad(G)$ and with maximum degree $\Delta$ can be edge-colored using the optimal number of $\Delta$ colors in quasilinear time, whenever $\Delta\ge 2\mad(G)$. 
The maximum average degree is within a multiplicative constant of other popular graph sparsity parameters like arboricity, degeneracy or maximum density.
Our algorithm extends previous results of Chrobak and Nishizeki~\cite{chrobak-nishizeki} and Bhattacharya, Costa, Panski and Solomon~\cite{Bhattacharya23}.
\end{abstract}

\section{Introduction}
Algorithms for edge-coloring is a classic research topic which has become active again recently.
Let us recall that edge-coloring of graph $G$ is a function $\pi:E(G)\rightarrow\mathbb{N}$ which assigns different values (called {\em colors}) to incident edges. 
By {\em $k$-edge-coloring} we mean an edge coloring which uses at most $k$ colors.
The minimum number of colors that suffice to color graph $G$ is called the {\em chromatic index} of $G$ and denoted $\chi'(G)$. For a maximum degree of $G$ denoted $\Delta(G)$ it is clear that $\chi'(G)\ge\Delta(G)$, while the classic theorem of Vizing states that $\chi'(G)\le\Delta(G)+1$. 
For general graphs, determining whether $\chi'(G)=\Delta(G)$ is NP-complete, as shown by Holyer~\cite{holyer}.

\begin{table}[t]
    {
    \begin{center}
    \begin{tabular}{c|c|c|c}
        \toprule
        Number                       & \multirow{2}{*}{Graph class} & \multirow{2}{*}{Time} & \multirow{2}{*}{Reference}\\
        of colors& & &\\
        \midrule
        $(1+\epsilon)\Delta$ & general & $O(m\epsilon^{-1}\log n)$ &  Elkin and Khuzman~\cite{elkin2024deterministic} \\
        $(1+\epsilon)\Delta$ & general & $O(m\log(1/\epsilon))$ \dice  &  Assadi~\cite{assadi24} \\
        $\Delta+O(\log n)$ & general & $O(m\log\Delta)$ \dice &  Assadi~\cite{assadi24} \\
        $\Delta+2\alpha-2$ & arboricity $\alpha$ & $O(m\log\Delta)$ &  Christiansen et al.~\cite{AChristiansen-sparse-dynamic} \\
        $\Delta+1$ & general & $O(m\sqrt{n})$ &  Sinnamon~\cite{sinnamon} \\
        $\Delta+1$ & general & $O(mn^{1/3})$ \dice &  Bhattacharya et al.~\cite{Bhattacharya24} \\
        $\Delta+1$ & general & $O(m\Delta\log n)$ &  Gabow et al.~\cite{gabow} \\
        $\Delta+1$ & general & $O(m\Delta^{18})$ \dice &  Bernshteyn and Dhawan~\cite{BernshteynDhawanBoundedDelta} \\
        $\Delta+1$ & general & $O(n^2\log n)$ \dice &  Assadi~\cite{assadi24} \\
        $\Delta+1$ & arboricity $\alpha$ & $O(m\alpha\log n)$ \dice &  Bhattacharya et al.~\cite{Bhattacharya23} \\
        $\Delta+1$ & arboricity $\alpha$ & $O(m\alpha^7\log n)$ &  {\bf This work} \\ 
        $\Delta+1$ & arboricity $\alpha$ & $\tilde{O}(m\sqrt{n}\,\tfrac{\alpha}{\Delta})$ \dice &  Bhattacharya et al.~\cite{Bhattacharya23} \\
        $\chi'(G)$ & bounded treewidth& $O(n)$ &  Zhou et al.~\cite{ZhouTreewidth} \\
        $\Delta$ & bipartite & $O(m\log \Delta)$ &  Cole, Ost and Shirra~\cite{ColeOstSchirra} \\
        $\Delta$ & bounded genus, $\Delta\ge 19$ & $O(n)$ &  Chrobak and Yung~\cite{ChrobakY89} \\
        $\Delta$ & bounded genus, $\Delta\ge 9$ & $O(n\log n)$ &  Chrobak and Nishizeki~\cite{chrobak-nishizeki} \\
        $\Delta$ & planar, $\Delta\ge 9$ & $O(n)$ &  Cole and Kowalik~\cite{cole-kowalik} \\
        $\Delta$ & $\Delta\ge 2\mad(G)$ & $O(m\alpha^3\log n)$ \dice &  {\bf This work} \\ 
        $\Delta$ & $\Delta\ge 2\mad(G)$ & $O(m\alpha^7\log n)$ &  {\bf This work} \\ \bottomrule
    \end{tabular}
    \caption{\label{table:results}Summary of state-of-the-art edge-coloring algorithms. Note that $\alpha=\Theta(\mad(G))$. The dice symbol \dice denotes a randomized algorithm.}
    \end{center}
    }
\end{table}

\subsection{Algorithms for general graphs}

Vizing's proof can be easily transformed to an algorithm for edge-coloring graphs in $\Delta+1$ colors running in time $O(nm)$, where $n$ and $m$ denote the number of vertices and edges, respectively, of the input graph (we use this notation throughout the whole paper). Gabow, Nishizeki, Kariv, Leven, and Terada~\cite{gabow} were the first to show an algorithmic progress: they designed two algorithms, one running in time $O(m\sqrt{n\log n})$  and another in time $O(m\Delta\log n)$. 
Both these algorithms saw improvements recently: Sinnamon~\cite{sinnamon} obtained deterministic time $O(m\sqrt{n})$, Bhattacharya, Carmon, Costa, Solomon and Zhang~\cite{Bhattacharya24} have randomized time $O(mn^{1/3})$, while Bernshteyn and Dhawan~\cite{BernshteynDhawanBoundedDelta} randomized time $\Delta^{O(1)}m$. 
Another direction of research is improving the running time at the price of an increased number of colors: very recently, Elkin and Khuzman~\cite{elkin2024deterministic} presented a deterministic algorithm that uses at most $(1+\epsilon)\Delta$ colors for any $\epsilon\ge\frac1\Delta$ and works in time $O(m\epsilon^{-1}\log n)$, while Assadi~\cite{assadi24} was able to get the improved time of  $O(m\log(1/\epsilon))$, at the price of randomization;
see also Duan et al.~\cite{DuanHZ19} for an earlier result of this kind.

\subsection{Algorithms for graph classes}

Restricting the input graph may allow for algorithms with improved performance.
By the classical K\H{o}nig's Theorem, for bipartite graphs only $\Delta$ colors suffice, and Cole, Ost and Shirra~\cite{ColeOstSchirra} provided an $O(m\log\Delta)$-time algorithm. Zhou, Nakano and Nishizeki~\cite{ZhouTreewidth} gave an algorithm that uses $\chi'(G)$ colors and runs in linear time for graphs of bounded treewidth. Chrobak and Yung~\cite{ChrobakY89} gave a linear time algorithm that uses only $\max\{\Delta,19\}$ colors and works for graphs of bounded genus. 
Graphs from this class can be also colored using only $\max\{\Delta,9\}$ colors by an algorithm of Chrobak and Nishizeki~\cite{chrobak-nishizeki} that runs in time $O(n\log n)$. 
For the special case of planar graphs, Cole and Kowalik~\cite{cole-kowalik} improved that to linear time. 

\subsection{Uniformly sparse graphs and our result}

{\em Maximum average degree} is defined as $\mad(G)=\max_H{\rm ad}(H)$, where the maximum is over all nonempty subgraphs of $G$ and ${\rm ad}(H)=\tfrac{1}s\sum_{v\in V(H)}d_H(v)$ is the average degree of $H$. 
Hence, when $\mad(G)$ is low, graph $G$ is uniformly sparse. 
It is easy to see that $\mad(G)$ equals exactly twice the {\em maximum density} of $G$, i.e., $\max_S|E(S)|/|V(S)|$, which is close to the {\em arboricity} defined as $\alpha(G)=\max_{S, |V(S)|>1}\lceil|E(S)|/(|V(S)|-1)\rceil$. Indeed, all these three parameters (and also the degeneracy), are within a small constant of each other (see e.g.~\cite{KowalikApproxDensity}). 
In particular, when one of them is bounded by a constant, so are all the others. Moreover, planarity, bounded treewidth or bounded genus all imply bounded maximum average degree, arboricity, etc. 

Very recently, Bhattacharya, Costa, Panski and Solomon~\cite{Bhattacharya23} obtained two algorithms that use $\Delta+1$ colors and work in randomized time $O(m\sqrt{n}\Delta/\alpha(G))$ and $O(m\alpha(G)\log n)$. In particular, the latter one is quasilinear for bounded arboricity, and both algorithms improve over the results of Gabow et al.~\cite{gabow} when $\Delta(G)=\omega(\alpha(G))$. Another recent result is an algorithm of Christiansen, Rotenberg and Vlieghe~\cite{AChristiansen-sparse-dynamic} that uses $\Delta+2\alpha(G)-2$ colors and runs in deterministic time $O(m\log\Delta)$.

In this work we continue this line of research with the goal of {\em improving the number of colors used}. 
Note that the quasilinear algorithm of Chrobak and Nishizeki~\cite{chrobak-nishizeki} uses only $\Delta$ colors for the special case of graph $G$ of bounded genus $g$ and such that $\Delta\ge 9$. Since then $\mad(G) \le 6+O(g/n)$ it is natural to ask if it is possible to generalize it to any class of uniformly sparse graphs. We answer this question in the affirmative by showing the following main result.

\begin{restatable}{theorem}{ourmainthm}
    \label{thm:main}
    Every graph $G$ with $n$ vertices and $m$ edges such that $\Delta(G)\ge 2\mad(G)$ can be $\Delta(G)$-edge-colored 
    \begin{enumerate}
        \item by a randomized algorithm running in time $O(m\mad(G)^3\log n)$ in the expectation and with high probability,
        \item by a deterministic algorithm in time $O(m\mad(G)^7\log n)$.
    \end{enumerate}
\end{restatable}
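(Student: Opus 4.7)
The plan is to imitate the Vizing-fan/Kempe-chain framework that Bhattacharya, Costa, Panski and Solomon use to obtain their $(\Delta+1)$-edge-coloring in $O(m\alpha\log n)$ time, but to adapt it to the extra difficulty that a $\Delta$-edge-coloring leaves saturated vertices (of degree exactly $\Delta$) with no missing colors at all. Two consequences of the hypothesis $\Delta\ge 2\mad(G)$ will do the heavy lifting. First, every subgraph of $G$ has a vertex of degree at most $\mad(G)\le \Delta/2$, so $G$ admits a degeneracy-style vertex ordering; the ``later'' endpoint of any edge in such an ordering has at most $\mad(G)$ neighbors later and hence at least $\Delta/2$ missing colors in any partial $\Delta$-edge-coloring of the edges already processed. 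Second, on average in any Vizing fan the neighbors are far from saturation, so the sparse ``room'' needed to complete an augmentation is never too deep.

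I would process the edges of $G$ in reverse of such a degeneracy ordering, incrementally extending a partial $\Delta$-edge-coloring. When adding edge $uv$ with $v$ ``later'' and therefore having $\Omega(\Delta)$ missing colors, the easy case is when $u$ is also unsaturated, where a standard fan/Kempe chain argument suffices. The interesting case is when $u$ is saturated: here I would grow an augmenting sequence at $u$ — a fan, possibly chained to a Kempe-chain switch and then to another fan, in the style of Chrobak--Nishizeki. The main combinatorial lemma to establish is that such a sequence terminates, at a usable missing color, after only $f(\mad)$ steps for some polynomial $f$ independent of $\Delta$. Sparsity is essential here: every saturated vertex encountered along the search can be charged to neighbors far from saturation via the $\mad$ bound, capping the search depth in terms of $\mad$ alone.

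On the implementation side I would reuse and slightly extend the dynamic data structures from Bhattacharya et al.: for every vertex a balanced BST of missing colors; for every (vertex, color) pair the unique neighbor (if any) incident via that color; and, to support Kempe chains, the ability to traverse and invert a bichromatic path in time proportional to its length. The same degeneracy argument bounds the relevant fans and chains at $\mathrm{poly}(\mad)$ edges, which gives $\mathrm{poly}(\mad)\log n$ cost per augmentation and $\mathrm{poly}(\mad)\cdot m\log n$ overall. Randomisation shortens the combinatorial search — for instance by sampling uniformly among the missing colors at the low-degree endpoint, so that a usable color is hit in expectation after $O(1)$ tries rather than after a deterministic enumeration — saving polynomial factors in $\mad$, which should account for the gap between the $O(m\mad^3\log n)$ randomised bound and the $O(m\mad^7\log n)$ deterministic one.

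The main obstacle is the bounded-length augmenting-sequence lemma itself: for $(\Delta+1)$ colors one always makes progress because every vertex misses a color, but for $\Delta$ colors the search has to thread through saturated vertices to reach an unsaturated ``sink'', and it is not \emph{a priori} obvious that this can be done in $\mathrm{poly}(\mad)$ steps uniformly in $\Delta$. Once such a local augmentation lemma is in place, combining it with the incremental coloring pipeline and the dynamic data structures above should yield Theorem~\ref{thm:main} by a fairly direct accounting.
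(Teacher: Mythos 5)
Your plan diverges from the paper's and, more importantly, hinges on a lemma that is false as stated. You propose to bound each augmenting structure --- fan plus Kempe chains --- by $\mathrm{poly}(\mad(G))$ edges, ``uniformly in $\Delta$,'' and you correctly flag this as the main obstacle. It cannot be done: a maximal bichromatic (Kempe) path can have length $\Theta(n)$ even in a $1$-degenerate graph (color the edges of an $n$-vertex path alternately with two colors), so sparsity gives no per-augmentation control. The paper never bounds individual chains; it amortizes their \emph{total} length, and this is where much of the work is. Two mechanisms are used: (a) Sinnamon's lemma --- picking the next uncolored edge (and the alternating color at its endpoint) uniformly at random makes the expected chain length $O(m\Delta/\ell)$ when $\ell$ edges remain uncolored, so summing over $\ell$ gives $O(m\Delta\log n)$; and (b) the Chrobak--Nishizeki deterministic selection of $\Omega(\ell/\Delta^5)$ uncolored edges whose chains all alternate the same two colors and do not interact, hence are pairwise disjoint and have total length $O(m)$. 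Your use of randomness (sampling missing colors to hit a usable one quickly) plays no role in the actual argument and would not rescue the per-chain bound.

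The second missing ingredient is the reason an edge can be colored at all with only $\Delta$ colors. Your degeneracy-order incremental scheme guarantees that the ``later'' endpoint $v$ has many free colors, but that is not sufficient: the Vizing fan grown at the saturated endpoint $u$ must always be extendable when it is not yet active, which requires the Vizing Adjacency Lemma hypothesis $d^\Delta(u)\le \Delta-d(v)+[d(v)=\Delta]$ --- a condition on degrees in the current graph, not on free colors --- and the edges produced by your ordering need not satisfy it. The paper instead proves by discharging that when $\Delta\ge 2\ad(G)$ at least $|E|/(2\ad(G)^2)$ edges satisfy this condition (Theorem~\ref{thm:lower-bound-weak}), peels these weak edges off, colors the remainder recursively, and then inserts the weak edges via the algorithmic VAL with the amortization above; a final Zhou--Nakano--Nishizeki partition of $E(G)$ into subgraphs of maximum degree $\Theta(\mad(G))$ whose maximum degrees sum to $\Delta(G)$ converts all $\Delta$-dependencies into $\mad$-dependencies. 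Without a correct substitute for the weak-edge counting theorem and for the total-chain-length amortization, your outline does not yield the theorem.
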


In particular, whenever $\mad(G)=\log^{O(1)}n$, the algorithms work in quasilinear time. Here, by {\em high probability} we mean probability at least $1-n^{-k}$, for an arbitrary constant $k>0$. Note that as a corollary we get also the first {\em deterministic} quasi-linear time algorithm for $(\Delta+1)$-edge-coloring graphs of bounded $\mad(G)$ (or arboricity, etc.), thus derandomizing the result of Bhattacharya et al.~\cite{Bhattacharya23}. Indeed, if $\Delta(G)\ge 2\mad(G)$ we apply our result, and otherwise just apply the $O(m\Delta\log n)$-time algorithm of Gabow et al.~\cite{gabow}.

Let us comment on the assumption $\Delta(G)\ge 2\mad(G)$. Clearly, some kind of assumption is needed because there are sparse graphs of chromatic index $\Delta(G)+1$. Vizing~\cite{vizing-unsolved} (see also~\cite{stiebitz}) conjectured\footnote{The original conjecture states that in every critial graph (in a sense, a minimal graph with chromatic index $\Delta+1$) we have $|E(G)|\ge\tfrac12((\Delta(G)-1)|V(G)|+3)$.} that for $\chi'(G)=\Delta(G)$ it suffices that $\mad(G)\le\Delta-1$. While this conjecture is still open, there has been a substantial progress. Fiorini~\cite{fiorini}, Haile~\cite{haile} and Sanders and Zhao~\cite{sanders-zhao-size} obtained sufficient conditions of the form $\mad(G) \le \tfrac12\Delta + o(\Delta)$.
Next, it was improved to $\mad(G) < \tfrac23(\Delta + 1)$ by Woodall~\cite{woodall-2007,woodall-2007-erratum}, and further to $\mad(G) <\min\{\tfrac34\Delta-2, 0.738\Delta-1.153\}$ by Cao, Chen, Jiang, Liu and Lu~\cite{cao-critial-density}. All these results are in fact constructive and correspond to polynomial-time algorithms (but {\em not} quasilinear\footnote{The algorithm corresponding to the proof of Sanders and Zhao~\cite{sanders-zhao-size} can be implemented in time $O(nm)$; for the other the worst time complexity is at least that large.}). 
However, while the proof of Sanders and Zhao~\cite{sanders-zhao-size}  relies on a simple procedure for extending a partial coloring called Vizing Adjacency Lemma (VAL, see Theorem~\ref{th:VAL} below), the later proofs result in considerably more complicated algorithms. 
In fact we show that the constant 2 is optimal if we restrict ourselves to algorithms that color the graph edge by edge using VAL (see Theorem~\ref{thm:best} for a formal statement).
Since we prefer to keep this article short, we leave the task of improving the assumption as an open problem.

\subsection{Sketch of our approach}

Let $D$ be the number of available colors. 
Let us recall the following classical result (usually stated in terms of critical graphs).
Here, for a statement $P$ the expression $[P]$ equals $1$ if $P$ holds and $0$ otherwise.

\begin{theorem}[Vizing Adjacency Lemma, VAL~\cite{vizing:critical}]
    \label{th:VAL}
    Let $G$ be a simple graph and let $e=xy$ be an edge such that $x$ has at most $D - d(y) + [d(y)=D]$ neighbors of degree $D$.
    Then any partial $D$-edge-coloring of $G$ which colors a subset $E_c$ of edges of $G$, $e\not\in E_c$, can be extended to a partial $D$-edge-coloring that colors $E_c\cup\{e\}$.
\end{theorem}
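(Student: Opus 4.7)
I argue by contradiction. Suppose the partial $D$-edge-coloring $\pi$ of $E_c$ cannot be extended to color $e=xy$. For a vertex $v$, let $M(v)$ denote the set of colors not appearing on any colored edge at $v$; since $xy$ is uncolored, $|M(x)|\ge D-d(x)+1\ge 1$ and $|M(y)|\ge D-d(y)+1$. If some $\gamma\in M(x)\cap M(y)$, then coloring $xy$ with $\gamma$ extends $\pi$; hence non-extendability forces $M(x)\cap M(y)=\emptyset$. Consequently, each $\alpha\in M(y)$ equals $\pi(xz_\alpha)$ for a unique neighbor $z_\alpha\in N(x)\setminus\{y\}$, and the correspondence $\alpha\mapsto z_\alpha$ is injective.

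The crux is to show that every such $z_\alpha$ has degree $D$. The easy sub-case is $\beta\in M(z_\alpha)\cap M(x)$: recoloring $xz_\alpha$ with $\beta$ frees color $\alpha$ at $x$, after which $xy$ is colored $\alpha$, extending $\pi$. So $M(z_\alpha)\cap M(x)=\emptyset$. Assume for contradiction $d(z_\alpha)<D$ and pick $\beta\in M(z_\alpha)$; then $\beta=\pi(xw)$ for some $w\in N(x)\setminus\{y,z_\alpha\}$. Iterating this step yields a maximal \emph{Vizing fan} $F=(y_0=y,y_1=z_\alpha,y_2=w,\ldots,y_k)$ at $x$, where each $y_{i+1}$ is a new neighbor of $x$ with $\pi(xy_{i+1})\in M(y_i)$, and no further extension of $F$ is possible. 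A \emph{shifting argument}---when $\gamma\in M(x)\cap M(y_i)$, recolor $xy_{j-1}$ with $\pi(xy_j)$ for $j=1,\ldots,i$ and color $xy_i$ with $\gamma$---immediately extends $\pi$, so $M(x)\cap M(y_i)=\emptyset$ for every fan vertex.

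To rule out the remaining obstructions, I invoke Kempe chains. For $\gamma\in M(x)$ and $\beta\in M(y_i)\cap M(y_j)$ with $i<j$, the $(\gamma,\beta)$-alternating subgraph decomposes into paths and even cycles, in which $x$, $y_i$, $y_j$ each appear as path endpoints (exactly one of $\gamma,\beta$ is present at each). Since three endpoints cannot lie on the same path, some path contains $y_i$ or $y_j$ but not $x$; swapping colors on that path installs $\gamma$ into $M(y_i)$ or $M(y_j)$, activating the shifting argument. The swap does not disturb any fan edge $xy_\ell$, because $\pi(xy_\ell)\neq\gamma$ (as $\gamma\in M(x)$) and the fan-vertex missing-color structure rules out $\pi(xy_\ell)=\beta$. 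Thus the $M(y_i)$'s are pairwise disjoint, and combined with the maximality of $F$ this forces $M(y_k)=\emptyset$; a further iteration of the Kempe-chain argument, applied to maximal fans obtained by altering the second vertex, propagates the conclusion $d(y_i)=D$ to every fan vertex $y_i$ with $i\ge 1$, yielding in particular $d(z_\alpha)=D$.

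The count then closes the proof. The vertex $x$ has at least $|\{z_\alpha:\alpha\in M(y)\}|=|M(y)|\ge D-d(y)+1$ degree-$D$ neighbors, and when $d(y)=D$ the vertex $y$ itself is an additional degree-$D$ neighbor of $x$ (since $xy\notin E_c$ forces $y\ne z_\alpha$). Hence $x$ has at least $D-d(y)+1+[d(y)=D]$ degree-$D$ neighbors, contradicting the hypothesis. The main obstacle is the Kempe-chain bookkeeping in the third paragraph: one must identify a swap-path avoiding $x$ in every configuration of the endpoints $x,y_i,y_j$ among chain components, and verify that the swap preserves both the missing colors at $x$ and the coloring of all fan edges throughout the iteration.
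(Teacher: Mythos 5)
Your overall plan --- assume non-extendability, show that the neighbors $z_\alpha$ of $x$ carrying the colors $\alpha\in M(y)$ all have degree $D$, then count --- is a recognizable classical route, and your first, second and final paragraphs are sound. The genuine gap is that the central claim $d(z_\alpha)=D$ is never established. What your fan-plus-Kempe argument actually proves is that the \emph{terminal} vertex $y_k$ of a maximal chain fan has $M(y_k)=\emptyset$: by maximality every $\beta\in M(y_k)$ would have to lie in $M(x)$ or equal $\pi(xy_\ell)\in M(y_{\ell-1})$ for some $\ell\le k$, and both are excluded by the disjointness you derived. That yields $d(y_k)=D$ for the \emph{last} fan vertex only, which is perfectly consistent with $d(y_1)=d(z_\alpha)<D$: the fan simply passes through $z_\alpha$ and terminates at some other, saturated vertex, and no contradiction is reached. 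The sentence about ``maximal fans obtained by altering the second vertex'' does not repair this: the second vertex is forced to be $z_\alpha$ (the unique neighbor with $\pi(xz_\alpha)=\alpha\in M(y)$), and $z_\alpha$ can never be the terminal vertex of a maximal fan, since $M(z_\alpha)\ne\emptyset$, $M(z_\alpha)\cap M(x)=\emptyset$, and no color of $M(z_\alpha)$ equals $\alpha=\pi(xy_1)$, so the fan always extends past it. As written, the proof never rules out $d(z_\alpha)<D$.

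There are two standard ways to close this. One is the paper's own route (Theorem~\ref{th:weak-edge-coloring} via Lemmas~\ref{lem:total-free} and~\ref{lem:extend}), which is direct rather than by contradiction: the hypothesis $d^D(x)\le D-d(y)+[d(y)=D]$ is used \emph{during} the fan construction to show that a non-active fan of degree-$<D$ vertices can always be extended by another degree-$<D$ vertex, so by finiteness the fan becomes active and a rotation (preceded by at most one Kempe swap) colors $xy$. The other keeps your by-contradiction shape but replaces ``every $z_\alpha$ has degree $D$'' with the fan equation: using the \emph{multifan} ($\pi(xy_{i+1})\in\bigcup_{j\le i}M(y_j)$, maximal meaning every neighbor carrying a color of $\bigcup_j M(y_j)$ already lies in the fan), disjointness gives $\sum_i|M(y_i)|\le k-1$, while $|M(y_1)|\ge D-d(y)+1$ and $|M(y_i)|\ge D-d(y_i)$ force at least $D-d(y)+1$ of the $y_i$, $i\ge 2$, to have degree $D$. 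Your chain fan (colors drawn only from the immediate predecessor) does not support this count, since its maximality constrains only $M(y_k)$. The Kempe bookkeeping you flag at the end is real but fixable --- any fan edge colored $\beta$ is incident to $x$ and hence lies on the $(\gamma,\beta)$-component through $x$, which is exactly the component you do not swap --- but it is secondary to the gap above.
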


In this work, an edge $xy$ that satisfies the assumption of VAL is called {\em $(D,x)$-weak}, or simply {\em $x$-weak} when $D$ is known.
Moreover, $xy$ is {\em weak} if it is $x$-weak or $y$-weak.

Note that for $D=\Delta(G)+1$, every edge is weak, just because then every vertex has no neighbors of degree $D$.
Hence, in a simple implementation of standard proofs of Vizing Theorem, we color the input graph $G$ edge by edge, using the algorithm originating from the Vizing's proof of VAL. 
This algorithm uses so-called fans and alternating paths.
Given an edge $e=xy$, an $e$-fan is roughly a certain sequence of neighbors of $x$ (equivalently, incident edges) --- a precise definition will be given later.
An {\em alternating path} is just a maximal path with edges colored alternately in two colors.
Of course the algorithm behind VAL may need to recolor some of the already colored edges. 
However, it turns out that it always suffices to recolor some edges of an $xy$-fan and a single alternating path $P_{xy}$.

When we aim at a quasilinear algorithm, even for $D=\Delta(G)+1$, the approach above poses the following problem: even if coloring edge $xy$ takes just $O(1)$ time per a recolored edge, our bound on the number of those edges is $\deg(x)+|P_{xy}|$, and the sum $\sum_{xy\in E(G)}(\deg(x)+|P_{xy}|)$ can be as large as $\Theta(nm)$, for $n:=|V(G)|$, $m:=|E(G)|$. The problem with long alternating paths has been resolved in an elegant way by Sinnamon~\cite{sinnamon}: he shows, roughly, that if we pick a {\em random} uncolored edge $xy$ out of all $\ell$ uncolored edges, then the probability that a fixed alternating path is chosen is $O(\tfrac1\ell)$. Then he argues that the {\em total} length of alternating paths is only $O(\Delta m)$, so $\Expect{|P_{xy}|}=O(\tfrac1\ell\Delta m)$. 
This means that when $\ell$ is large, i.e., most of the time, the path is usually short, and the total expected length of the alternating paths used by the algorithm is $O(H_\ell\Delta m)$, where $H_\ell=O(\log\ell)$ is the $\ell$-th harmonic number. 

Another solution for the same problem was used by Chrobak and Nishizeki~\cite{chrobak-nishizeki}. They also need a large number $\ell$ of uncolored edges.
They show that then we can pick a set $I$ of size $\Omega(\ell/\Delta(G)^4)$ of these edges so that all the corresponding alternating paths are of the same {\em type} (alternate the same pair of colors) and moreover coloring one of edges from $I$ using VAL does not interact with coloring the others, intuitively: the VAL algorithm for one of these edges works the same, independent of whether it was invoked for some other edges from $I$ before or not. Hence we can apply VAL to all the edges of $I$. The total length of the alternating paths is then linear, since they all have the same type and hence they are disjoint. A simple calculation shows that after iterating this $O(\log n\,\Delta(G)^4)$ times all edges get colored, and the total running time of coloring the $\ell$ edges is $O(m\log n\,\Delta(G)^{O(1)})$.

Now we turn to the setting of this work, where $D=\Delta(G)$ and $\Delta(G)\ge 2\mad(G)$.
For simplicity assume that $\mad(G)$ is bounded.
The problem we encounter is that in our case not all edges are weak. 
However, Sanders and Zhao~\cite{sanders-zhao-size} show that there is always at least one weak edge.
Their proof can be transformed to an algorithm that finds a weak edge $xy$, removes it, colors the graph recursively, and then extends the coloring.
Unfortunately, then the path $P_{xy}$ may be long.

In this work, we show that when average density is bounded, then at least a constant fraction of edges are weak, thus generalizing a similar lower bound of Chrobak and Nishizeki~\cite{chrobak-nishizeki} for bounded genus graphs. This allows for a recursive procedure as follows: find weak edges, color the other edges recursively, and then color the weak edges in time $O(m\log n\,\Delta(G)^{O(1)})$ using the approach of Sinnamon or Chrobak-Nishizeki. Since the number of weak edges is $\Omega(m)$, this is only $O(\log n\,\Delta(G)^{O(1)})$ per edge.

The last missing piece in this construction is getting rid of the dependency on $\Delta(G)$. The solution of this particular issue can be found by adapting a technique of Zhou, Nakano and Nishizeki~\cite{ZhouTreewidth}. Namely, one can show that edges of the input graph can be partitioned in such a way that each of the resulting induced graphs has maximum degree at least $2\mad(G)$ and at most $4\mad(G)+2$, and moreover the maximum degrees of all the individual graphs sum up to $\Delta(G)$, so we can just color each of them separately using disjoint palettes of colors, and merge the results. In this way, we trade the dependency on $\Delta(G)$ for the dependency on $\mad(G)$. Having $\Delta(G)$ bounded makes many issues easy, for example controlling the running time of the algorithm behind VAL.

\subsection{Organization of the paper}

This paper is organized as follows.
In Section~\ref{sec:lower} we show a lower bound for the number of weak edges for graphs with $\Delta\ge 2\ad(G)$.
At the end of the section we will also prove that the constant 2 in the assumption  cannot be improved if we want to get any positive lower bound.
Next, in Section~\ref{sec:efficient-val} we show an implementation of the Vizing Adjacency Lemma stating some of its properties that we need.
In Section~\ref{sec:main} we describe the complete algorithm and we prove Theorem~\ref{thm:main}. 
Finally, in Section~\ref{sec:conclusion} we summarize the paper and suggest some open problems.

\section{Lower bound for the number of weak edges}
\label{sec:lower}

By a $d$-vertex (resp. $d$-neighbor) we mean a vertex (resp. neighbor) of degree $d$.
For an integer $i$, let $d^i(v)$ be the number of $i$-neighbors of $v$.
Let $D$ be an integer.
In what follows, we call an edge $uv$ {\em $(D,u)$-weak} if $d^D(u)\le D - d(v) + [d(v)=D]$.
Moreover, $uv$ is $D$-weak if it is $(D,u)$-weak or $(D,v)$-weak.
Note that for $D\ge\Delta(G)+1$, every edge is $D$-weak.
For simplicity, whenever $D$ is known, we write that an edge is {\em $u$-weak} or {\em weak} when it is $(D,u)$-weak or weak, respectively.
An edge $uv$ is {\em strong} if it is not weak, i.e., it is neither $u$-weak nor $v$-weak.
A vertex is called {\em weak} if it is incident to at least one weak edge, otherwise it is called {\em strong}.

In this section we focus on the case $D=\Delta(G)$ and we show that for bounded average degree, if $\Delta(G)$ is large enough, then at least a constant fraction of edges are weak.

\begin{theorem}
    \label{thm:lower-bound-weak}
Let $G=(V,E)$ be a nonempty graph with no isolated vertices and such that $\Delta(G) \ge 2\ad(G)$.
Then the number of $\Delta(G)$-weak edges in $G$ is at least
\[\frac{|E|}{2\ad(G)^2}.\] 
\end{theorem}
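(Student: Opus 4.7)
The plan is to exploit the consequences of an edge being \emph{strong}. Introducing the shorthand $\tilde d(w):=d(w)-[d(w)=D]\in[0,D-1]$, the VAL weakness criterion reads: $uv$ is $u$-weak iff $d^D(u)+\tilde d(v)\le D$. Hence $uv$ is strong iff both $d^D(u)+\tilde d(v)\ge D+1$ and $d^D(v)+\tilde d(u)\ge D+1$.

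First I would record an elementary observation: any vertex $v$ with $d^D(v)\le 1$ has every incident edge $v$-weak, since $d^D(v)\le 1\le D-d(u)+[d(u)=D]$ always holds. Letting $V_{\le 1}=\{v:d^D(v)\le 1\}$, this gives $|W|\ge \tfrac12\sum_{v\in V_{\le 1}}d(v)\ge \tfrac12|V_{\le 1}|$. Writing $n_D:=|\{v:d(v)=D\}|$, the handshake $\sum_v d^D(v)=Dn_D\le 2|E|=an$ (the last step uses $\Delta\ge 2\ad(G)$) together with Markov gives $|V_{\le 1}|\ge n(1-a/2)$, which is useful only when $a$ is small.

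For general $a$ I would pass to a direct double count. Summing the strong-edge inequality over the two orientations of each strong edge yields
\[
2|S|(D+1)\ \le\ \sum_{w}s(w)\bigl(d^D(w)+\tilde d(w)\bigr),
\]
where $s(w)$ denotes the number of strong edges at $w$. The desired bound $|S|\le |E|\bigl(1-\tfrac{1}{2a^2}\bigr)$ should then follow by carefully upper-bounding the right-hand side, exploiting that the global budget $\sum_v d^D(v)\le an$ forces most vertices to have small $d^D$, so that only a small fraction of vertices can contribute substantially to the sum.

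The main obstacle is converting this qualitative picture into the sharp multiplicative factor $1/(2a^2)$: naive bounds ($s(w)\le d(w)$ combined with $d^D+\tilde d\le 2D-1$) recover only the trivial $|S|\le 2|E|$. I anticipate the proof uses a discharging scheme in the spirit of Chrobak--Nishizeki: each vertex begins with charge $d(v)-2a$ (so the total charge is $\le -2|E|$ by the hypothesis $\Delta\ge 2\ad(G)$) and charge is transferred along strong edges via the above inequalities; the final balance then forces at least $|E|/(2a^2)$ weak edges to account for the negative total.
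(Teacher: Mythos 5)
There is a genuine gap: your proposal identifies the right tool but does not execute it. Your first observation (vertices $v$ with $d^D(v)\le 1$ have all incident edges weak) is correct but, combined with Markov, yields $|W|\ge \tfrac12 n(1-a/2)$, which matches the required $n/(4a)$ only at $a=1$ and is vacuous for $a\ge 2$; it is not a usable partial result. Your double count over strong edges, as you yourself note, collapses to the trivial bound $|S|\le 2|E|$ --- the inequality $d^D(u)+\tilde d(v)\ge D+1$ summed over orientations does not by itself separate strong from weak edges sharply enough. What remains is the statement that you ``anticipate'' a discharging argument, with a guessed initial charge $d(v)-2a$ and no discharging rule, no statement of the target final charges, and no verification. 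That anticipation is where the entire content of the proof lives, so the proposal as written does not prove the theorem.

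For comparison, the paper's argument assigns initial charge $d(v)-\ad(G)$ (total zero), and has every vertex $x$ with $d(x)\ge \ad(G)+\tfrac12$ send $\bigl(\ch(x)-\tfrac12\bigr)/\bigl(d(x)-\Delta+d(y)-1\bigr)$ along each \emph{strong} edge $xy$ to a low-degree endpoint $y$. The two nontrivial verifications are: (a) a sender keeps charge at least $\tfrac12$, because strongness of $xy$ forces $d^\Delta(x)\ge\Delta-d(y)+1$, which caps the number of recipients by $d(x)-\Delta+d(w)-1$ where $w$ is the minimum-degree recipient; and (b) a strong low-degree vertex receives from \emph{all} its neighbors (strongness forces every neighbor to have degree $>\ad(G)+\tfrac12$), and a monotonicity argument in the received amounts, using $\Delta\ge 2\ad(G)$, pushes its final charge up to $\tfrac32$. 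Weak vertices merely never drop below $1-\ad(G)$. The zero total charge then forces $|V_w|\ge |V|/(2\ad(G))$, hence $|E|/(2\ad(G)^2)$ weak edges. Note in particular that the hypothesis $\Delta\ge 2\ad(G)$ enters in step (b), not in the computation of the total initial charge (where your proposal places it); without supplying analogues of (a) and (b) the discharging framework proves nothing.
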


\begin{proof}
    For brevity, we denote $\tilde{d}=\ad(G)$.
    We use the discharging method~\cite{CRANSTON2017766}, in a similar spirit as Sanders and Zhao~\cite{sanders-zhao-size}, with a crucial difference that here we need a linear lower bound, and not just an existence of a weak edge.
    We assign an initial {\em charge} to each vertex $v$ as follows:
    \[\ch(v) = d(v) - \tilde{d}.\]
    Note that the total charge is zero:
    \[\sum_{v\in V}\ch(v) = \sum_{v\in V}d(v) - |V|\sum_{v\in V}d(v)/|V| = 0.\]

    Next, we move the charge as follows: 
    every vertex $x$ of degree $d(x)\ge\tilde{d}+\tfrac12$, sends 
    \[\frac{\ch(x) -\tfrac12}{d(x)-\Delta+d(y)-1}\]
    units of charge to every neighbor $y$ of degree $<\tilde{d}+\tfrac12$ such that edge $xy$ is strong. 
    We claim that the expression above is always well defined and positive.
    Indeed, since $xy$ is strong, $d(x)\ge [d(y)<\Delta]+\dD(x)\ge[d(y)<\Delta]+\Delta-d(y)+[d(y)=\Delta]+1=\Delta-d(y)+2$. 
    This also implies that $d(y)\ge \Delta-d(x)+2\ge 2$, which we state below as a claim for later use.

    \begin{claim}
        \label{claim:deg2}
        If a vertex receives charge, then it has degree at least 2.
    \end{claim}

    For a vertex $v$, let $\ch'(v)$ denote the charge of $v$ after moving the charge. Since the charge is only being moved between vertices, $\sum_{v\in V}\ch'(v) = \sum_{v\in V}\ch(v) = 0$. Our goal is to show that
    \begin{enumerate}[$(i)$]
        \item for every strong vertex the final charge is at least $\tfrac12$, and
        \item for every weak vertex the final charge is at least $1-\tilde{d}$.
    \end{enumerate}
     Then, denoting the set of weak vertices by $V_w$, we have
    \[0 = \sum_{v\in V}\ch'(v) \ge |V_w|\cdot(1-\tilde{d})+(|V|-|V_w|)\cdot \tfrac12,\]
    which implies $(\tilde{d}-\tfrac12)|V_w|\ge |V|/2$. 
    Then, since $G$ has no isolated vertices, $\tilde{d}\ge 1$, and hence $|V_w|\ge |V|/(2(\tilde{d}-\tfrac12))> |V|/(2\tilde{d})$. Since every weak vertex is incident with at least one weak edge, we get at least $|V|/(4\tilde{d})=|E|/(2\tilde{d}^2)$ weak edges, as required.
    Hence, indeed it suffices to show $(i)$ and $(ii)$.

    \begin{claim}
        \label{claim:big-vertices}
    For any vertex $v$ of degree $d(v)\ge\tilde{d}+\tfrac12$ we have $\ch'(v)\ge \tfrac12$.
    \end{claim}

    \begin{claimproof}
        Since $\ch(v)\ge\tfrac12$, the claim holds trivially when $v$ does not send charge.
        Hence in what follows assume $v$ sends charge. 
        Then $v$ is incident to at least one strong edge.
        Let $w$ be a neighbor of $v$ such that $vw$ is strong and $d(w)$ is minimum among such neighbors.
        Note that $d(w)<\tilde{d}+\tfrac12$, for otherwise $v$ does not send charge to any vertex.
        Then $v$ sends at most
        \[\frac{d(v)-\tilde{d}-\tfrac12}{d(v)-\Delta+d(w)-1}\]
        to each neighbor $u$ of $v$ such that $vu$ is strong and $d(u)<\tilde{d}+\tfrac12$, and does not send to other vertices.
        Note that $v$ does not send to $\Delta$-vertices, because $\tilde{d}+\tfrac12\le \tfrac12\Delta + \tfrac12=\Delta-\tfrac12(\Delta-1)\le \Delta$.
        Since $vw$ is strong, $\dD(v)\ge \Delta - d(w) + 1$.
        Hence $v$ has at most $d(v)-\Delta + d(w) - 1$ neighbors of degree smaller than $\Delta$ and this is also an upper bound for the number of neighbors it sends charge to. It follows that in total $v$ sends at most $d(v)-\tilde{d}-\tfrac12$ charge, and hence it keeps at least $\tfrac12$ of charge, as required.
    \end{claimproof}        

    \begin{claim}
        \label{claim:small-vertices}
        For any strong vertex $v$ of degree $d(v)<\tilde{d}+\tfrac12$ we have $\ch'(v)\ge \tfrac32$.
        \end{claim}
    
        \begin{claimproof}
            Let $w$ be a neighbor of $v$ of minimum degree.
            Since $vw$ is strong, $d(w)\ge\dD(w) \ge \Delta-d(v)+1>2\tilde{d}-(\tilde{d}+\tfrac12)+1=\tilde{d}+\tfrac12$. It follows that all neighbors of $v$ send charge to $v$, and
            \[\ch'(v)=d(v)-\tilde{d}+\sum_{i=d(w)}^\Delta n_i\cdot \frac{i-\tilde{d}-\tfrac12}{i-\Delta+d(v)-1},\]
            where $n_i$ is the number of $i$-neighbors of $v$.
            Since $\Delta-d(v)+1 > 2\tilde{d} - (\tilde{d}+\tfrac12) + 1 = \tilde{d}+\tfrac12$, we infer that 
            $\frac{i-\tilde{d}-\tfrac12}{i-\Delta+d(v)-1}=1+\frac{\Delta-d(v)+1-\left(\tilde{d}+\tfrac12\right)}{i-\Delta+d(v)-1}$ is decreasing with increasing $i$, so
            \[\ch'(v)\ge d(v)-\tilde{d}+d(v)\cdot \frac{\Delta-\tilde{d}-\tfrac12}{d(v)-1}.\]
            Since $d(v)\ge 2$ by Claim~\ref{claim:deg2} and $\Delta\ge 2\tilde{d}$, we get $\ch'(v)\ge \tfrac32$, as required.
        \end{claimproof}        
    
    To finish the proof, note that $(i)$ follows from claims~\ref{claim:big-vertices} and~\ref{claim:small-vertices}.
    For $(ii)$, a weak vertex of degree at least $\tilde{d}+\tfrac12$ has final charge at least $\tfrac12$ by Claim~\ref{claim:big-vertices}, which is more than $1-\tilde{d}$ since $\tilde{d}\ge 1$ (no isolated vertices). Finally, a weak vertex $v$ of degree $<\tilde{d}+\tfrac12$ does not send charge, so it has final charge $\ch'(v)\ge\ch(v)=d(v)-\tilde{d}\ge 1-\tilde{d}$, as required.
    This ends the proof.
\end{proof}    

We conclude this section by showing that the assumption $\Delta(G)\ge 2\ad(G)$ in the formulation of Theorem~\ref{thm:lower-bound-weak} is the best possible.

\begin{theorem}
    \label{thm:best}
    For every $\epsilon>0$ there are arbitrarily large graphs $G$ with no isolated vertices, with no weak edges and such that $\Delta(G)\ge(2-\epsilon)\ad(G)$.
\end{theorem}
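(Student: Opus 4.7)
The plan is to exhibit, for each $\epsilon>0$, an explicit two-layer family of graphs parametrized by integers $d\ge 2$, $\Delta\ge d+1$, and $N$, where $d$ and $\Delta$ will depend on $\epsilon$ while $N$ will produce arbitrarily large examples. I would set $k=\Delta-d+1$ and partition the vertex set into a ``high'' part $A$ of size $N$ and a ``low'' part $B$ of size $(d-1)N/d$. On $A$ I place any $k$-regular graph (for instance, when $(k+1)\mid N$, a disjoint union of copies of $K_{k+1}$), and between $A$ and $B$ any $(d-1,d)$-biregular bipartite graph (for instance, under appropriate divisibilities, a disjoint union of copies of $K_{d-1,d}$). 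Every $a\in A$ will then have degree $k+(d-1)=\Delta$, every $b\in B$ will have degree $d$, and $B$ will be independent.

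The next step is to check that no edge of the resulting graph $G$ is weak. For an $A$-$A$ edge $xy$ we have $d^\Delta(x)=k\ge 2>1=\Delta-d(y)+[d(y)=\Delta]$, and symmetrically for $y$. For an $A$-$B$ edge $xy$ with $x\in A$ and $y\in B$ we have $d^\Delta(x)=k=\Delta-d+1>\Delta-d=\Delta-d(y)+[d(y)=\Delta]$, so $xy$ is not $x$-weak, and $d^\Delta(y)=d\ge 2>1=\Delta-d(x)+[d(x)=\Delta]$, so $xy$ is not $y$-weak.

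A direct degree count will then give $|V(G)|=N(2d-1)/d$ and $2|E(G)|=N(\Delta+d-1)$, hence
\[\ad(G)=\frac{d(\Delta+d-1)}{2d-1}.\]
So $\Delta/\ad(G)=(2d-1)\Delta/\bigl(d(\Delta+d-1)\bigr)$ is monotone increasing in $\Delta$ with limit $(2d-1)/d=2-1/d$. I would therefore fix $d>1/\epsilon$ (so the limit strictly exceeds $2-\epsilon$) and then pick $\Delta$ large enough that $\Delta\ge(2-\epsilon)\ad(G)$; explicitly $\Delta\ge d(d-1)(2-\epsilon)/(d\epsilon-1)$ suffices. Letting $N$ range over multiples of $\mathrm{lcm}(d,k+1,2)$ produces arbitrarily large graphs of this form.

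I do not expect any real obstacle: the only delicate point is ensuring the $k$-regular and $(d-1,d)$-biregular building blocks exist for the chosen integer parameters, which is routine via the standard constructions mentioned above. Intuitively, the construction also explains why the constant $2$ is optimal --- the ratio $|B|/|A|=(d-1)/d$ can be driven to $1$ by taking $d$ large, which forces $\ad(G)$ towards $\Delta/2$, while keeping $d^\Delta(a)=\Delta-d+1$ at exactly the strong-edge threshold for every $a\in A$.
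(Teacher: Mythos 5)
Your construction is correct and is essentially the paper's own: decomposing the high-degree part into cliques of size $k+1=\Delta-d+2$ and attaching copies of $K_{d-1,d}$ to an independent set of degree-$d$ vertices is exactly the structure used in the paper, which fixes $\Delta=d(d-1)$ and grows the graph by increasing $d$, whereas you decouple $\Delta$ from $d$ and add the size parameter $N$. All your degree counts, the verification that no edge is weak, and the threshold $\Delta\ge d(d-1)(2-\epsilon)/(d\epsilon-1)$ check out.
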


\begin{proof}
    Let $d$ be an integer such that $d\ge \max\{\tfrac3\epsilon-1,3\}$ and set $\Delta=d(d-1)$.
    Set also $n_d=(d-1)(\Delta-d+2)$ and $n_\Delta=d(\Delta-d+2)$.
    Construct graph $G$ as follows. 
    Begin with $\Delta-d+2$ copies of the complete bipartite graph $K_{d-1,d}$ (note that since $d\ge 2$, $\Delta-d+2=d(d-2)+2\ge 5$).
    Let $V_d$ be the set of the vertices of the resulting graph of degree $d$, and let $V_\Delta$ be the remaining vertices, i.e., the vertices of degree $d-1$.
    Note that $|V_d|=n_d$ and $|V_\Delta|=n_\Delta$.
    Next, we partition $V_\Delta$ arbitrarily into $d$ disjoint sets of size $(\Delta-d+2)$ and for each such set we add all edges between its verties so that it induces the clique $K_{\Delta-d+2}$. 
    Observe that then every vertex in $V_\Delta$ has degree $d-1+\Delta-d+1=\Delta$.
    Note that since $d\ge 2$, we have $\Delta\ge d$ and hence $G$ has maximum degree $\Delta$.
    This completes the construction of $G$.
    
    Then, we have
    \[\ad(G) = \frac{dn_d+\Delta n_\Delta}{n_d+n_\Delta} = \frac{\overbrace{d(d-1)}^\Delta+\Delta d}{d-1 + d} = \frac{d+1}{2d-1}\Delta.\]
    It follows that
    \[\Delta = \frac{2d-1}{d+1}\ad(G) = \left(2-\frac3{d+1}\right)\ad(G) \ge (2-\epsilon)\ad(G).\]

    Now it suffices to check that $G$ has no weak edges.
    Let $uv$ be an edge such that $u\in V_d$ and $v\in V_\Delta$.
    Then $uv$ is not $u$-weak because $\dD(u)=d-1>1=\Delta-d(v)+[d(v)=\Delta]$.
    Also, $uv$ is not $v$-weak because $\dD(v)=\Delta-d+1>\Delta-d=\Delta-d(u)+[d(u)=\Delta]$.
    Finally, consider an edge $uv$ between two vertices of $V_\Delta$.
    Then $\dD(u)=\Delta-d+1>1=\Delta-d(v)+[d(v)=\Delta]$, so by symmetry $uv$ is not weak.
\end{proof}    

\section{Algorithmic Vizing Adjacency Lemma}
\label{sec:efficient-val}

The goal of this section is to provide an efficient algorithm which implements the Vizing Adjacency Lemma (Theorem~\ref{th:VAL}).
The contents of this section is mostly standard and well-known.

Throughout this section by $D$ we denote the number of available colors.
Of course in this paper we are mostly interested in the case when $D$ is the maximum degree $\Delta$, but it is convenient to state results also for the case $D>\Delta$, since our algorithm colors subgraphs of the input graph recursively and at some point the maximum degree drops.
A {\em partial edge-coloring} of graph $G$ is a function $\pi:E(G)\rightarrow\{1,\ldots,D\}\cup\{\bot\}$ such that it is an edge-coloring when restricted to edges colored by $\{1,\ldots,D\}$.
For a vertex $x$ we define $\pi(x)$ as the set of colors used at edges incident with $x$, i.e., $\pi(x)=\{\pi(xy)\mid y \in N_G(x)\} \setminus \{\bot\}$.
Moreover, $\free(x)$ denotes the set of colors free at $x$, i.e., $\free(x)=\{1,\ldots,D\}\setminus \pi(x)$.
An $(a,b)$-{\em alternating path} is a path $P$ in $G$ such that its edges are colored alternately in $a$ and $b$. We call $P$ {\em maximal} if it cannot be extended to a longer $(a,b)$-alternating path. The unordered pair $\{a,b\}$ is called the {\em type} of $P$.

Fans, developed by Vizing~\cite{vizing:first} are useful structures in edge-colorings. The notion appears is several variations. Here we need a new one, which differs from the standard fan by modifying condition (F3) and adding condition (F4) below. The definition is somewhat inspired by multi-fans in the book of Stiebitz et al.~\cite{stiebitz}.

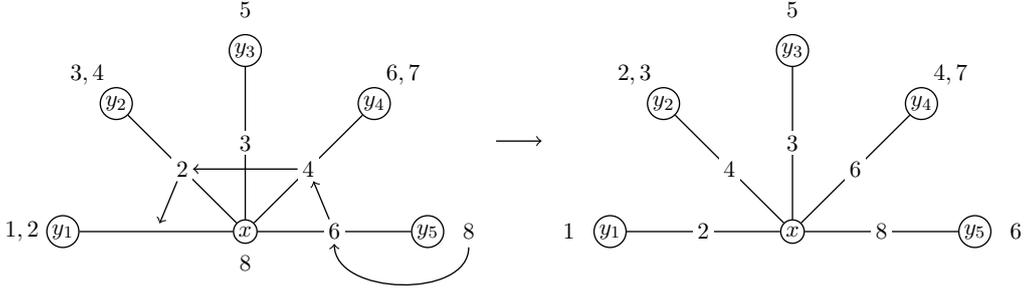
\begin{figure}
    \begin{center}
    \begin{tikzpicture}[line width=0.5pt,scale=0.6,every node/.style={scale=0.8}]
        \tikzstyle{vtx}=[draw,circle,fill=black,minimum size=5pt,inner sep=0pt]
        \tikzstyle{whitenode}=[draw,circle,fill=white,minimum size=11pt,inner sep=1pt]
        
        \draw (0,0) node[whitenode] (x) {$x$};
        \foreach \i/\c/\f in {1//{1,2},2/2/{3,4},3/3/5,4/4/{6,7},5/6/8}
        {    
            \draw (180-45*\i+45:4) node[whitenode] (y_\i) {$y_\i$};
            \draw (180-45*\i+45:4.9) node (z_\i) { $\f$};
            \ifthenelse{\NOT \i=1}{\draw (x)--(y_\i) node [midway, fill=white,inner sep=2pt] (c_\i){ $\c$};}{\draw (x)--(y_\i) node [midway] (c_\i){};}
        }    
    
        \draw [->] (c_5) -- (c_4);
        \draw [->] (c_4) -- (c_2);
        \draw [->] (c_2) -- (c_1);
        \draw [->] (z_5) to[out=-90,in=-90] (c_5);
        \draw (-90:0.7) node (z_0) { $8$};
        \draw [->] (5.5,2) -- (6.5,2);
    
        \begin{scope}[xshift=12cm]
            \draw (0,0) node[whitenode] (x) {$x$};
            \foreach \i/\c/\f in {1/2/{1},2/4/{2,3},3/3/5,4/6/{4,7},5/8/6}
            {    
                \draw (180-45*\i+45:4) node[whitenode] (y_\i) {$y_\i$};
                \draw (180-45*\i+45:4.9) node (z_\i) { $\f$};
                \draw (x)--(y_\i) node [midway, fill=white,inner sep=2pt] (c_\i){$\c$};
            }    
            \end{scope}        
    \end{tikzpicture}    
    \caption{\label{fig:fan}Rotating a fan. The numbers next to vertices denote free colors.}
\end{center}
\end{figure}

Let $G=(V,E)$ be a graph and let $\pi$ be a partial edge-coloring of $G$.
Then, a {\em fan} is a sequence $F=(x,y_1,\ldots,y_k)$ such that
\begin{enumerate}[(F1)]
    \item $y_1,\ldots,y_k$ are different neighbors of $x$,
    \item $xy_1$ is uncolored by $\pi$,
    \item for every $i=2,\ldots,k$ edge $xy_i$ is colored by $\pi$ and $\pi(xy_i)\in \bigcup_{j=1}^{i-1}\free(y_j)$,
    \item for every $i=2,\ldots,k$, $d(y_i)<D$.
\end{enumerate}
The number $k$ is called the {\em size} of fan $F$.
Since in this work we we do not use other variants of fan, we do not give it a special name. The notion of a fan is motivated by the fact that there is a simple procedure, called {\em rotating} the fan, which, given a color $c\in\free(x)\cap\free(y_k)$ recolors some edges of the form $xy_i$ so that edge $xy_1$ can be colored with a free color, see Algorithm~\ref{alg:rotate} and Figure~\ref{fig:fan}.
Of course, the definition of the fan does not guarantee that such a color $c$  exists.

\newcommand\mycommfont[1]{\footnotesize\ttfamily\textcolor{blue}{#1}}
\SetCommentSty{mycommfont}

\begin{algorithm}
    \SetKwInOut{Input}{input}
    \DontPrintSemicolon
    \caption{Rotating a fan.}\label{alg:rotate}
    \Input{a fan $F=(x,y_1,\ldots,y_k)$, color $c\in\free(x)\cap\free(y_k)$.}
    $i \gets k$\;
    \While{$i \ge 1$}{
        \tcp*[l]{Invariant: for some $j=1,\ldots,i$ we have $c\in\free(x)\cap\free(y_j)$}
        \If{$c \in \free(y_i)$}{
          $c' \gets \pi(xy_i)$\; 
          $\pi(xy_i) \gets c$\;
          $c \gets c'$ 
        }
        $i \gets i - 1$
    }
\end{algorithm}

\begin{lemma}
    Given graph $G$, its partial coloring $\pi$, a fan $F=(x,y_1,\ldots,y_k)$ and color $c\in\free(x)\cap\free(y_k)$, Algorithm~\ref{alg:rotate} finds a new partial coloring, which colors the same edges as $\pi$ and additionally $xy_1$.
\end{lemma}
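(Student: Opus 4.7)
The plan is to establish a loop invariant stating that at the top of each iteration with counter value $i$, the current $\pi$ is a valid partial coloring whose set of colored edges equals that of the input, and there exists an index $j\in\{1,\ldots,i\}$ such that the current value of $c$ lies in $\free(x)\cap\free(y_j)$. The base case $i=k$ holds by the hypothesis of the lemma with $j=k$. Termination then yields the claim: when $i=1$ is reached, the invariant forces $j=1$, so the \emph{if}-branch fires; since $xy_1$ was uncolored by (F2), the ``swap'' at this final step effectively just assigns color $c$ to $xy_1$, and the algorithm halts with exactly one additional colored edge relative to the input, as required.

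For the inductive step I would handle the two branches separately. In the \emph{no-swap} branch ($c\notin\free(y_i)$), no edge is recolored, so both the validity of $\pi$ and the set of colored edges are preserved; moreover the index $j$ provided by the invariant must be strictly smaller than $i$ (because $c\in\free(y_j)$ while $c\notin\free(y_i)$), which re-establishes the invariant for $i-1$. In the \emph{swap} branch ($c\in\free(y_i)$) the algorithm sets $\pi(xy_i)\gets c$ and then updates $c\gets c'$, where $c'$ is the former color of $xy_i$. The recoloring is legal because, by the invariant together with the branch condition, $c$ is free at both endpoints $x$ and $y_i$; and the set of colored edges is unchanged (for $i\ge 2$ the edge $xy_i$ was colored by (F3) and remains colored).

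The heart of the argument is to show that after the swap the invariant is restored for $i-1$, i.e.\ that the new $c$, namely $c'$, lies in $\free(x)\cap\free(y_{j'})$ for some $j'\le i-1$. For the $\free(x)$ part, $c'=\pi(xy_i)$ was the color of an edge incident to $x$; after re-coloring $xy_i$ from $c'$ to $c$, the color $c'$ becomes free at $x$ provided no other edge at $x$ was already colored $c'$, which follows because $\pi$ is a proper partial edge-coloring and assigns distinct colors to distinct edges at $x$. For the $\free(y_{j'})$ part, fan condition (F3) guarantees $c'=\pi(xy_i)\in\bigcup_{j<i}\free(y_j)$, so some $j'<i$ works; moreover the swap only touches the color of $xy_i$ and, by (F1), $y_{j'}\neq y_i$, so $\free(y_{j'})$ is unaffected.

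The only subtlety worth flagging is the boundary behaviour at the last iteration ($i=1$): there $xy_1$ is uncolored, so $c'=\pi(xy_1)=\bot$, and the final assignment $c\gets c'$ merely stores $\bot$, which is harmless since the loop immediately exits. I expect the main obstacle to be purely notational rather than conceptual, namely keeping careful track of which colors are free at $x$ and at the various $y_j$ across the swap; everything else is routine bookkeeping anchored in the properness of $\pi$ and in the fan conditions (F1)--(F3).
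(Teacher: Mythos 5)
Your proof is correct and follows exactly the paper's approach: the paper's own proof consists solely of stating the loop invariant "for some $j=1,\ldots,i$ we have $c\in\free(x)\cap\free(y_j)$," and your write-up simply verifies that invariant in full detail (base case, both branches, and the $i=1$ boundary), using (F1)--(F3) precisely where they are needed.
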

\begin{proof}
    It suffices to observe that at the beginning of the loop the following invariant is satisfied: for some $j=1,\ldots,i$ we have $c\in\free(x)\cap\free(y_j)$. 
    %
\end{proof}

We call a fan $F=(x,y_1,\ldots,y_k)$ {\em active} when one of the two conditions below hold:
\begin{enumerate}[({A}1)]
    \item there is $i=1,\ldots,k$ such that $\free(y_i)\cap\free(x)\ne\emptyset$,
    \item there are $i,j=1,\ldots,k$ such that $i\ne j$ and  $\free(y_i)\cap\free(y_j)\ne\emptyset$.
\end{enumerate}
If additionally $F'=(x,y_1,\ldots,y_{k-1})$ is not active, we call $F$ a {\em minimal active fan}.

We will prove a series of simple lemmas. In all of them we work with a graph $G$ and its partial coloring $\pi$.

\begin{lemma}
    \label{lem:total-free}
    If a fan $F=(x,y_1,\ldots,y_k)$ is not active, then 
    \[\left|\bigcup_{i=1}^k \free(y_i)\right| \ge D - d(y_1)+k.\]
\end{lemma}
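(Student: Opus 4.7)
The plan is to observe that non-activeness immediately gives disjointness of the free-color sets at the $y_i$'s, so the size of the union equals the sum of the sizes, and then to lower bound each $|\free(y_i)|$ separately using the definition of a fan.

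First I would argue that condition (A2) failing means $\free(y_i)\cap\free(y_j)=\emptyset$ for every $i\ne j$, and hence
\[\left|\bigcup_{i=1}^k \free(y_i)\right| = \sum_{i=1}^k |\free(y_i)|.\]
(Note that (A1) is not needed for this particular lemma.) So it suffices to bound the right-hand side.

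Next I would bound the individual terms. Recall that $|\free(y)|=D-|\pi(y)|$ and $|\pi(y)|$ is at most the number of colored edges incident with $y$, so $|\pi(y)|\le d(y) - u(y)$, where $u(y)$ is the number of uncolored edges at $y$. For $i=1$, condition (F2) says that $xy_1$ is uncolored, so $u(y_1)\ge 1$ and therefore $|\free(y_1)|\ge D-d(y_1)+1$. For $i\ge 2$, the crude bound $|\free(y_i)|\ge D-d(y_i)$ together with condition (F4), which guarantees $d(y_i)<D$, gives $|\free(y_i)|\ge 1$.

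Combining these estimates yields
\[\sum_{i=1}^k|\free(y_i)| \ge (D-d(y_1)+1) + \sum_{i=2}^k 1 = D-d(y_1)+k,\]
which is exactly the required lower bound. There is no real obstacle here; the only subtle point is making sure that the bound on $|\free(y_1)|$ exploits the extra $+1$ from the uncolored edge $xy_1$, which is precisely what gives the $+k$ rather than a $+k-1$ in the final inequality.
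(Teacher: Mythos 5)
Your proof is correct and follows essentially the same route as the paper's: failure of (A2) gives disjointness of the sets $\free(y_i)$ so the union's size is the sum, then $|\free(y_1)|\ge D-d(y_1)+1$ from (F2) and $|\free(y_i)|\ge 1$ from (F4) give the bound. Your observation that only (A2), not (A1), is needed is also accurate.
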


\begin{proof}
    Since $xy_1$ is uncolored, $|\pi(y_1)| \le d(y_1)-1$.
    Then, we have $|\free(y_1)| = D - |\pi(y_1)| \ge D - (d(y_1)-1)$ and by (F4), $|\free(y_i)| \ge 1$ for $i=2,\ldots,k$. It follows that
    \[\left|\bigcup_{i=1}^k \free(y_i)\right| \stackrel{(A2)}{=} \sum_{i=1}^k |\free(y_i)| \ge D - (d(y_1)-1) + (k-1)\cdot 1 = D - d(y_1)+k.\]
\end{proof}

\begin{lemma}
    \label{lem:extend}
    Let $xy_1$ be an $x$-weak edge.
    If a fan $F=(x,y_1,\ldots,y_k)$ is not active, then it can be extended to another fan $F'=(x,y_1,\ldots,y_{k+1})$. 
\end{lemma}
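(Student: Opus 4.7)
My plan is to count, in two ways, edges incident to $x$ whose color lies in $C:=\bigcup_{i=1}^k \free(y_i)$, and then to use the weakness of $xy_1$ to argue that at least one such edge must reach a neighbor of degree less than $D$; that neighbor will serve as $y_{k+1}$.

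First I would exploit the non-activity of $F$. Condition (A1) failing means $C\cap \free(x)=\emptyset$, so every color of $C$ appears at some edge incident with $x$. Since $\pi$ is proper on the colored edges, the map assigning a color in $C$ to the (unique) edge at $x$ carrying it is an injection, so there are exactly $|C|$ edges $xz$ with $\pi(xz)\in C$. By (F3) the $k-1$ edges $xy_2,\ldots,xy_k$ are among them, and none of them is $xy_1$ because $xy_1$ is uncolored by (F2). Applying Lemma~\ref{lem:total-free}, the number of \emph{candidate} edges $xz$ with $\pi(xz)\in C$ and $z\notin\{y_1,\ldots,y_k\}$ is therefore at least
\[
|C|-(k-1)\;\ge\;\bigl(D-d(y_1)+k\bigr)-(k-1)\;=\;D-d(y_1)+1.
\]

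Next I would use that $xy_1$ is $x$-weak, i.e.\ $d^D(x)\le D-d(y_1)+[d(y_1)=D]$, to bound how many of these candidates can have degree $D$. The key observation is that if $d(y_1)=D$, then $y_1$ itself is a $D$-neighbor of $x$ but is excluded from the candidate set, so the number of $D$-neighbors among candidates is at most $d^D(x)-[d(y_1)=D]\le D-d(y_1)$. Since the candidate set has size at least $D-d(y_1)+1$, at least one candidate $y_{k+1}$ satisfies $d(y_{k+1})<D$.

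Finally I would verify that $F'=(x,y_1,\ldots,y_{k+1})$ is indeed a fan: (F1) holds because $y_{k+1}$ is a neighbor distinct from $y_1,\ldots,y_k$ by construction; (F2) is inherited from $F$; (F3) for index $k+1$ is exactly the choice $\pi(xy_{k+1})\in C=\bigcup_{j=1}^k\free(y_j)$; and (F4) for $y_{k+1}$ is $d(y_{k+1})<D$, which is precisely what we just secured. The only delicate point in the argument is the indicator bookkeeping when $d(y_1)=D$, which is why I separated it out above; everything else is a direct combination of Lemma~\ref{lem:total-free}, the failure of (A1), and the definition of $x$-weakness.
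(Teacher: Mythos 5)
Your proof is correct and follows essentially the same route as the paper: both arguments consider the neighbors of $x$ whose incident edge is colored with a color from $\bigcup_{i=1}^k\free(y_i)$, use the failure of (A1) together with Lemma~\ref{lem:total-free} to lower-bound their number, and then invoke $x$-weakness (with the same $[d(y_1)=D]$ bookkeeping, noting $y_1$ is excluded since $xy_1$ is uncolored) to find a candidate of degree less than $D$. The only cosmetic difference is that you subtract $y_2,\ldots,y_k$ before counting $D$-neighbors while the paper counts low-degree vertices first; the content is identical.
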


\begin{proof}
    Consider the set $Z=\{z\in N_G(x) \mid \pi(xz) \in \bigcup_{i=1}^k\free(y_i)\}$.
    Then, \[|Z| \stackrel{(A1)}{=} \left|\bigcup_{i=1}^k \free(y_i)\right| \stackrel{\text{Lemma \ref{lem:total-free}}}{\ge}D - d(y_1)+k. \]
    As $xy_1$ is $x$-weak, we have $d^D(x)\le D-d(y_1)+[d(y_1)=D]$.
    Since $y_1\not\in Z$, because $xy_1$ is not colored, we know that $Z$ contains at least $D - d(y_1)+k-(D-d(y_1))=k$ vertices of degree $<D$.
    It follows that $Z\setminus\{y_i \mid i=2,\ldots,k\}$ has at least one vertex of degree $<D$, call it $y_{k+1}$. It is easy to check that $(x,y_1,\ldots,y_{k+1})$ satisfies conditions (F1)-(F4).
\end{proof}    

Now we turn to lemmas that describe algorithms. These algorithms use the following data structures that represent the input graph and a partial coloring of it:
\begin{itemize}
    \item adjacency lists, i.e., array $N$ such that $N[v]$ is a list of neighbors of $v$;
    \item for every $v\in V$ we keep an associative array $M_v$ that maps every color $c\in\pi(v)$ to the edge incident with $v$ colored with $c$. The array is also used to test membership in $\pi(v)$.
    We use a simple array of size $D$. Note that the total size of these arrays is $O(nD)$, but we will use it only when $D$ is somewhat small, i.e., $D=O(\mad(G))$.
\end{itemize}

The lemma below is rather standard (similar statements, but for different variants of the fan notion, can be found for example in Sinnamon~\cite{sinnamon} or Bhattacharya~\cite{Bhattacharya23}).

\begin{lemma}
    \label{lem:fan-building-x}
    Given an uncolored $x$-weak edge $xy_1$, one can find a minimal active fan $F=(x,y_1,\ldots)$ in time $O(D)$. The algorithm also returns the color that makes (A1) or (A2) satisfied.
\end{lemma}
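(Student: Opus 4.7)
The plan is to grow the fan one vertex at a time, starting from $F=(x,y_1)$, and to stop the moment either (A1) or (A2) first fires. As long as $F$ is not active, Lemma~\ref{lem:extend} guarantees an extension, so the growth step can never fail. To implement this efficiently I maintain two auxiliary structures: a global array $\sigma[1..D]$ in which $\sigma[c]$ records the smallest index $j$ such that $c\in\free(y_j)$ (or $\bot$ if no such $j$ has been seen), and a FIFO queue $Q$ of ``available'' colors, i.e.\ colors that are currently free at some $y_j$ but have not yet been tried as the color of a fan edge $xy_i$. I also assume that $\free(v)$ can be traversed in $O(|\free(v)|)$ time; this only needs a minor augmentation of $M_v$ (which is already an array of size $D$) with a doubly linked list of its $\bot$-slots, maintainable in $O(1)$ per recoloring.

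When a new vertex $y_i$ is appended to $F$, I walk through $\free(y_i)$. For each color $c$ I first test whether $c\in\free(x)$ (constant time via $M_x$), which, if true, exhibits witness (A1); otherwise I test whether $\sigma[c]\neq\bot$, which, if true, exhibits witness (A2) via $y_{\sigma[c]}$ and $y_i$. On any success I stop and report $F$ together with $c$ as a minimal active fan. Otherwise I set $\sigma[c]\gets i$ and push $c$ onto $Q$. To find $y_{i+1}$ I repeatedly pop a color $c$ from $Q$, set $z:=M_x[c]$, and accept $z$ as $y_{i+1}$ provided $d(z)<D$ and $z\notin\{y_1,\ldots,y_i\}$; the latter is checked with a scratch bit attached to the fan vertices. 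By Lemma~\ref{lem:extend}, before $F$ becomes active a valid $z$ must exist among the endpoints indexed by $Q$, so the inner search eventually succeeds.

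Correctness is immediate from the construction: every $y_{i+1}$ chosen in this way obeys (F1)--(F4), and the algorithm returns at the very first moment that (A1) or (A2) holds, so the returned fan is minimal active and the returned color is a valid witness. For the time bound I use the following observation: as long as $F$ is not active the sets $\free(y_1),\ldots,\free(y_i)$ are pairwise disjoint, for otherwise (A2) would already have fired; hence the total number of colors scanned in all $\free(y_i)$-loops is at most $D$. Each such color incurs $O(1)$ work and at most one enqueue; popping from $Q$ also takes $O(1)$ and there are at most $|Q|\le D$ pops, so the work inside the main loop is $O(D)$. After returning, I reset the modified entries of $\sigma$ and the scratch bits by walking through $Q$ and through the vertices of $F$, which costs another $O(D)$.

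The main obstacle I anticipate is not conceptual but purely data-structural: arranging matters so that traversing $\free(y_i)$, looking up candidate endpoints through $M_x$, and resetting the persistent global array $\sigma$ between successive invocations of this routine can all be charged to the work actually performed in the current call, rather than paying $\Theta(D)$ per fan vertex or per invocation for reinitialization. The linked-list augmentation of $M_v$ and the use of the trail of colors recorded in $Q$ to clean $\sigma$ on exit handle both issues simultaneously.
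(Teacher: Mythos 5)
Your proposal is correct and follows essentially the same approach as the paper: grow the fan incrementally, maintain a global color-indexed structure (the paper uses a bitmap $S$, you use the witness array $\sigma$) plus a queue $Q$ of colors to drive the extension step via $M_x$, stop at the first firing of (A1)/(A2), and charge all work to the $O(D)$ distinct colors encountered. Your extra care about enumerating $\free(y_i)$ lazily and resetting $\sigma$ is a reasonable (and harmless) elaboration of details the paper leaves implicit.
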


\begin{proof}
    Our algorithm is as follows. We begin with the fan $F_1=(x,y_1)$.
    Then we extend it vertex by vertex, until we get an active fan.
    Let us describe a single step of this process.
    We will use a set $S$ of colors, initially (i.e., before all steps) empty, implemented using a bitmap (of size $|D|$).
    We also use a queue $Q$ storing a subset of $S$.
    Let $F_i=(x,y_1,\ldots,y_i)$ be a fan built so far.
    We know that $i=1$ or $F_{i-1}$ is not active (otherwise $y_i$ would not be added).
    We maintain the invariant that $S$ consists of all the colors free at any $y_j$, $j<i$.
    First we check if $F_i$ is active.
    To this end, for each color $c\in\free(y_i)$ we check if $c\not\in\pi(x)$ or $c\in S$ in $O(1)$ time per color $c$. If one of these checks is positive (A1) or (A2) holds, respectively, so we stop (without checking further colors of $\free(y_i)$) and return $F_i$. Otherwise we add $c$ both to $S$ and $Q$ and proceed to the next $c$ (or finish the loop).
    Let us call the loop described above as the {\em activity checking loop}.
    When this loop finishes without returning a fan, we know that $F_i$ is not active.
    At this point we know by Lemma~\ref{lem:extend} that $F_i$ can be extended by a vertex $y_{i+1}$.
    Our algorithm finds $y_{i+1}$ as follows.
    We remove a color $c$ from $Q$, and find the edge $xz$ colored with $c$ using $M_x$. If $d(z)<D$ we set $z=y_{i+1}$, and otherwise we remove another color from $Q$ and so on.
    This loop will be called the {\em extension loop}.
    
    Now let us analyze the time complexity. Let $F=(x,y_1,\ldots,y_k)$ be the fan returned.
    Clearly, each iteration of each of the two loops takes constant time, so it suffices to bound the total number of iterations.
    For the activity checking loop, note that by (A1) each but the last iteration corresponds to a different color that is in $\pi(x)$ (it is not the case for the last iteration only if $F$ was returned at this iteration). Hence the number of these iterations is at most $|\pi(x)|\le d(x)-1$.
    Finally, the number of iterations in the extension loop is bounded by the number of colors inserted to $Q$. However, we insert to $Q$ only in the activity checking loop, one element per iteration.
    Hence the total number of insertions to $Q$ is bounded by $d(x)-1$.
    It follows that the total number of iterations in both loops is $O(d(x))$, and each of them is done in constant time.
    Since $d(x)\le D$ and initializing bitmap $S$ takes $O(D)$ time, we get the $O(D)$ time bound as required. 
    \end{proof}    
    
Now we are ready to prove the main theorem of this section. 
Its proof is basically an algorithmization of the proof of Theorem 2.1 from Stiebitz et al.~\cite{stiebitz}, using Lemma~\ref{lem:fan-building-x}. Recall that for a vertex $v$ the {\em closed neighborhood} of $v$ is the set $N[v]=N(v)\cup\{v\}$.

\begin{theorem}[Algorithmic VAL]
    \label{th:weak-edge-coloring}
    Let $G$ be a simple graph and let $e=xy$ be an $x$-weak edge.
    Then there is an algorithm which, given any partial $D$-edge-coloring $\pi$ of $G$ which colors a subset $E_c$ of edges of $G$, $e\not\in E_c$ finds a partial $D$-edge-coloring $\pi'$ that colors $E_c\cup\{e\}$.
    The algorithm runs in time $O(D+|P|)$, where $P$ is a maximal alternating path with an endpoint in $x$, possibly $P=\emptyset$. Moreover,
    \begin{enumerate}[$(V1)$]
        \item if needed, one can specify the color $c'\in\free(x)$ that alternates in $P$,
        \item $\pi'$ may differ from $\pi$ only on edges incident to $x$ and on edges of $P$,
        \item if $\pi(v)\ne\pi'(v)$ for a vertex $v$, then $v\in N[x]$ or $v$ is the other endpoint of $P$,
        \item the type of $P$ can be determined in time $O(D)$,
        \item the type of $P$ depends only on $\{(\pi(xz),\pi(z))\mid z \in N(x)\}$.
    \end{enumerate}
\end{theorem}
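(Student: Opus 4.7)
I plan to follow the classical proof of the Vizing Adjacency Lemma from Stiebitz et al.~\cite{stiebitz}, algorithmized via Lemma~\ref{lem:fan-building-x}. The first step is to build a minimal active fan $F=(x,y_1,\ldots,y_k)$ together with its witnessing color in $O(D)$ time. By minimality, the witness necessarily involves $y_k$: either (A1) fires, giving $c\in\free(x)\cap\free(y_k)$, or (A2) fires, giving $c\in\free(y_i)\cap\free(y_k)$ for some $i<k$. The same minimality argument shows that in the (A2) case $\free(x)\cap\free(y_l)=\emptyset$ for every $l$, and that $y_i$ is the unique vertex in $\{y_1,\ldots,y_{k-1}\}$ with $c$ free.

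In the (A1) case I just invoke Algorithm~\ref{alg:rotate} on $(F,c)$ and take $P=\emptyset$. In the (A2) case I pick $c'\in\free(x)$ (as specified in (V1)); then $c'\notin\free(y_l)$ for every $l$. I trace the maximal $(c,c')$-alternating path $P$ with endpoint $x$ by following alternating edges in $O(|P|)$ time using the arrays $M_\cdot$, and swap $c\leftrightarrow c'$ along $P$. Let $v$ denote the other endpoint of $P$; since $c$ is free at both $y_i$ and $y_k$, neither can lie in the interior of $P$. Let $xy_l$ be the first edge of $P$. A crucial observation is that either $y_l$ is outside the fan or $l>i$: indeed, if $y_l\in F$ then $\pi(xy_l)=c$ together with (F3) forces $c\in\free(y_j)$ for some $j<l$, and minimality leaves $j=i$ as the only option, so $l>i$.

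Now I split on $v$. If $v=y_i$, I invoke Algorithm~\ref{alg:rotate} on the full fan $F$ with color $c$. After the swap, $c$ is free at $y_k$ (untouched because $y_k\notin P$) and at $x$, so the starting configuration of the rotation is valid. The only fan edge whose color has changed is $xy_l$ (if $y_l\in F$), now colored $c'$; crucially $c'$ has just entered $\free(y_i)$, and since $i<l$, property (F3) at position $l$ is reestablished via $j=i$, while no other (F3) condition is affected because no other fan edge at $x$ carried $c$ or $c'$. Otherwise ($v=y_k$ or $v\notin\{y_i,y_k\}$), I invoke Algorithm~\ref{alg:rotate} on the prefix $F_i=(x,y_1,\ldots,y_i)$ with color $c$. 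Then $y_i\notin P$, so $c\in\free(x)\cap\free(y_i)$ after the swap, and since either $y_l\notin F$ or $l>i$, the edge $xy_l$ lies outside $F_i$, which therefore remains valid. Each rotation runs in $O(D)$, giving the total bound $O(D+|P|)$.

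Properties (V1)--(V5) follow directly: (V1) is by construction of $c'$; (V2) holds since only edges incident to $x$ (touched by rotation) and edges of $P$ (touched by swap) ever change color; (V3) holds because every interior vertex of $P$ merely exchanges colors $c$ and $c'$ on two incident edges, preserving its set of incident colors; (V4) holds because $c$ is returned by fan-building in $O(D)$ and $c'$ is read off $\free(x)$ in $O(D)$; (V5) holds because the fan-building procedure uses only $\pi(xz)$ and $\pi(z)$ for $z\in N(x)$, and so does the selection of $c'$. The main obstacle is the $v=y_i$ subcase, where the swap may repaint a fan edge; the two interlocking observations $l>i$ and "$c'$ becomes free at $y_i$ after the swap" are precisely what salvages property (F3).
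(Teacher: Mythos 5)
Your proof is correct and follows essentially the same route as the paper: build a minimal active fan via Lemma~\ref{lem:fan-building-x}, rotate immediately in the (A1) case, and in the (A2) case swap a maximal $(c,c')$-alternating path from $x$ and then rotate the prefix $F_j$ for the $j\in\{i,k\}$ whose vertex $y_j$ is not the far endpoint of $P$. You are in fact somewhat more careful than the paper's write-up in checking that condition (F3) survives the path swap (the first edge of $P$ may be a fan edge, and your observation that its position exceeds $i$ while $c'$ enters $\free(y_i)$ is exactly the right fix), so no gap.
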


\begin{proof}
    Let $y_1=y$. We apply Lemma~\ref{lem:fan-building-x}.
    Let $F=(x,y_1,\ldots,y_k)$ be the returned minimal active fan.
    Note that $k\le d(x) \le D$.

    Since $F$ is active, it satisfies (A1) or (A2). 
    In the former case, we have a color $c\in\free(y_k)\cap\free(x)$.
    Then we {\em rotate $F$} using Algorithm~\ref{alg:rotate}, which clearly runs in time $O(k)$ since the condition in line 3 is checked in  constant time using $M_{y_1}$.
    By property (F3), this results in the desired partial edge-coloring.
    Note that $(V2)$ is satisfied, and $(V1)$, $(V3)$-$(V5)$ are trivial since $P=\emptyset$.

    Now assume $F$ does not satisfy (A1) but it does satisfy (A2), i.e., by Lemma~\ref{lem:fan-building-x} we have a color $c$ that is free both in $y_k$ and $y_i$ for some $i=1,\ldots,k-1$. Such $y_i$ can be found simply by testing if $c\in\pi(y_j)$ for every $j=1,\ldots,k-1$: this takes time $O(k)=O(D)$.

    Let $c'$ be any color free at $x$ (it exists since $xy$ is uncolored, and if $c'$ is not given, it can be found in time $O(D)$).
    Find a maximal $(c,c')$-alternating path $P$ starting at $x$. This can be easily done in time $O(|P|)$ using the associative arrays $M_{*}$.
    Then there is $j\in\{i,k\}$ such that the other endpoint of the path $P$ is not $y_j$. 
    We swap the colors $c$ and $c'$ on $P$ in $O(|P|)$ time, obtaining another partial coloring with the same set of colored edges, but with $c\in\free(x)\cap\free(y_j)$. Then we rotate $F_j$ using Algorithm~\ref{alg:rotate}. We see that this procedure satisfies $(V2)$, $(V3)$ and $(V5)$. For $(V4)$, we just perform the procedure without finding $P$ and without recoloring.
\end{proof}

\section{The main algorithm}
\label{sec:main}

In this section we present our main result.

We begin with two lemmas which will be used to control the length of alternating paths from Theorem~\ref{th:weak-edge-coloring}.
The first one is due to Sinnamon~\cite{sinnamon} and uses randomization.

\begin{lemma}[Lemma 10 in Sinnamon~\cite{sinnamon}]
    \label{lem:expected-length}
    Let $G$ be a simple graph and with a partial edge-coloring $\pi$ which colors all but $\ell$ edges of $G$.
    Let $e$ be a random uncolored edge, let $x$ be an arbitrary endpoint of $e$ and let $c_0$ be a random color of $\pi(x)$.
    Let $c_1$ be an arbitrary color different than $c_0$.
    Consider the unique maximal $(c_0,c_1)$-alternating path $P_r$ starting at $x$.
    Then, $\Expect{|P_r|} = O(|E(G)|\Delta(G)/\ell)$.    
\end{lemma}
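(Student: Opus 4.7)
My plan is to express the expected length as a sum over the random choices and then bound the total by double-counting. Since $e$ is uniform on the $\ell$ uncolored edges and, given $e$, $c_0$ is uniform on $\pi(x_e)$, we have
\[
\Expect{|P_r|} = \frac{1}{\ell}\sum_{e} \frac{1}{|\pi(x_e)|}\sum_{c_0 \in \pi(x_e)} \bigl|P(x_e, c_0, c_1(c_0))\bigr|,
\]
where $P(v,a,b)$ denotes the unique maximal $(a,b)$-alternating path starting at $v$ and $c_1(\cdot)$ encodes the (possibly adversarial) choice of $c_1$ as a function of $c_0$. The goal is to bound the total $S := \sum_{e}\sum_{c_0 \in \pi(x_e)} \bigl|P(x_e, c_0, c_1(c_0))\bigr|$ by $O(m\Delta(G))$, since then using $|\pi(x_e)|\ge 1$ gives $\Expect{|P_r|}\le S/\ell = O(m\Delta(G)/\ell)$.

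To bound $S$, I would double-count over edges of $G$: for every $e'\in E(G)$ of color $c$, estimate the number of pairs $(e,c_0)$ such that $e'$ lies on $P(x_e, c_0, c_1(c_0))$. Necessarily $c\in\{c_0,c_1(c_0)\}$. When $c_0=c$, the path $P(x_e, c, c_1(c))$ is completely determined by $x_e$ and lies within the unique maximal $(c,c_1(c))$-alternating path containing $e'$; the candidate starting vertices on that path can be charged against the path's own edges, using that every color class is a matching. When $c_1(c_0)=c$, a symmetric argument applies after swapping the roles of the two colors. Finally, each vertex $v$ is the specified endpoint of at most $d(v)\le\Delta(G)$ uncolored edges, which supplies the multiplicative factor $\Delta(G)$. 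Summing these contributions yields $S=O(m\Delta(G))$.

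The main obstacle will be the case $c_1(c_0)=c$: here several distinct $c_0$'s can contribute through the same $c$-colored edge $e'$, and different walks from $x_e$ with different first-edge colors may later rejoin along common $c$-edges. A careful argument is needed to show that this sharing is not too severe, essentially relying on the fact that once the walk enters a given $c$-edge, it is forced to continue along the unique neighboring $c_0$-edge, so different $c_0$'s produce walks that can realign only for limited stretches. Note that the naive per-path bound $|P(v,c_0,c_1)|\le 2|E_{c_0}|+1$ with a direct summation only yields $S=O(m\ell)$, hence the weaker conclusion $\Expect{|P_r|}=O(m)$; the improvement to $O(m\Delta(G)/\ell)$ hinges on the global double-counting above.
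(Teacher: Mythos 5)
First, a caveat on the comparison: the paper does not prove this lemma at all --- it is quoted from Sinnamon's work and used as a black box --- so your attempt has to be measured against Sinnamon's argument rather than anything in this text. Against that standard, your proposal is an outline with the decisive step missing, and you say so yourself: the case $c_1(c_0)=c$ is exactly where the lemma lives, and ``a careful argument is needed to show that this sharing is not too severe'' is the entire content of the lemma, not a loose end. Until that argument is supplied, nothing has been proved.

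More seriously, I believe the reduction you set up cannot be completed. By invoking only $|\pi(x_e)|\ge 1$ you discard the averaging over the random choice of $c_0$, so your target $S=\sum_e\sum_{c_0\in\pi(x_e)}|P(x_e,c_0,c_1(c_0))|=O(m\Delta)$ is equivalent to proving the lemma for an \emph{adversarial} $c_0$ --- which would make the randomization of $c_0$ in the statement pointless, a warning sign in itself. Concretely, track the multiplicity with which a colored edge $e'$ (say of color $c$) is charged. The walk containing $e'$ is determined by a color pair $\{c,c'\}$, one of the two endpoints of the corresponding maximal component, and an uncolored edge at that endpoint: that is up to $(D-1)\cdot 2\cdot\Delta=\Theta(\Delta^2)$ pairs $(e,c_0)$, and nothing in your sketch reduces this to $\Theta(\Delta)$. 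In particular, a vertex $x$ with $|\pi(x)|=\Theta(\Delta)$ present colors and $\Theta(\Delta)$ uncolored edges contributes $\Theta(\Delta)\cdot\sum_{c_0\in\pi(x)}|P(x,c_0,\cdot)|$ to $S$, and the $\Theta(\Delta)$ distinct alternating paths leaving $x$ need not be short; summing over such vertices one only gets $S=O(m\Delta^2)$, hence $\Expect{|P_r|}=O(m\Delta^2/\ell)$. The factor you are missing is exactly the $1/|\pi(x)|$ you threw away (in Sinnamon's original formulation, a $1/\Theta(\Delta)$ factor coming from a color chosen uniformly from a $\Theta(\Delta)$-sized palette): the correct accounting bounds $\sum_{w}\sum_{\{\alpha,\beta\}}|P_w(\alpha,\beta)|$ over all starting vertices and color pairs by $O(\Delta m)$ (each color-$\{\alpha,\beta\}$ component is counted once per endpoint, and each colored edge lies in $D-1$ components), and then uses the uniform choice of the color to divide out one of the two $\Delta$ factors that otherwise appear. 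You should restore that normalization and redo the double counting with it in place; as written, the plan proves a weaker bound at best.
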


Note that if we iterate Theorem~\ref{th:weak-edge-coloring} for $\ell=\Omega(|E|)$ times, then by Lemma~\ref{lem:expected-length} the expected length per edge is low, namely $O(\Delta(G)H_\ell)=O(\Delta(G)\log n)$.
Below we state a lemma of Chrobak and Nishizeki, which achieves a similar goal deterministically.
Since their proof does not analyze the dependency on $\Delta(G)$, we reproduce it here, but involving $\Delta(G)$ in the analysis.

\begin{lemma}[Chrobak and Nishizeki~\cite{chrobak-nishizeki}]
    \label{lem:chrobak-nishizeki}
    Let $G$ be a simple graph and let $E_w\subseteq E(G)$ be a subset of weak edges. Denote $E_0=E(G)\setminus E_w$. 
    There is a deterministic algorithm which, given $G$, $E_w$ and a partial $\Delta$-edge-coloring $\pi$ which colors $E_0$, finds a new partial $\Delta$-edge-coloring $\pi'$ of $G$ which colors $E_1$ such that $E_0\subseteq E_1$ and $|E_1\setminus E_0| \ge |E_w|/(9\Delta(G)^5)$. The algorithm runs in time $O(|E(G)|+|E_w|\Delta(G))$.
\end{lemma}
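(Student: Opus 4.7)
The plan follows the Chrobak--Nishizeki scheme outlined in Section~1.3: select a large subset $I \subseteq E_w$ whose VAL-invocations through Theorem~\ref{th:weak-edge-coloring} are mutually non-interfering, and colour all of $I$ in a single pass. First, for every $e \in E_w$ I would invoke property~$(V4)$ of Theorem~\ref{th:weak-edge-coloring} in $O(\Delta)$ time to obtain the unordered colour pair $\tau(e)$ of the alternating path that VAL would use on $e$, fixing the free colour at the weak endpoint canonically via~$(V1)$. Since there are at most $\binom{\Delta}{2} < \Delta^2/2$ such pairs, some type $\tau^\ast = \{c_0,c_1\}$ is shared by a subset $S \subseteq E_w$ of size at least $2|E_w|/\Delta^2$. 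The payoff of restricting to $S$ is that the maximal $(c_0,c_1)$-alternating paths are pairwise vertex-disjoint, because the $(c_0,c_1)$-coloured subgraph has maximum degree two.

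For each $e_i = x_iy_i \in S$, denote by $P_{e_i}$ the alternating path that VAL would use on $e_i$, and define the \emph{zone} $Z_i := N[x_i] \cup V(P_{e_i})$. Properties $(V2)$ and $(V3)$ imply that running VAL on $e_i$ modifies only edges with both endpoints in $Z_i$ and only palettes of vertices in $Z_i$; combined with $(V5)$, this shows that if $Z_i \cap Z_j = \emptyset$ for all distinct $i,j \in I$, then the VAL-calls on the edges of $I$ commute and jointly extend $\pi$ to colour $E_0 \cup I$, since the fan, path, and path type computed for each $e_j \in I$ are identical regardless of whether other edges of $I$ have been processed before. I would then extract $I$ as an independent set in the conflict graph $H$ on $S$ whose edges are pairs of weak edges with overlapping zones.

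The main obstacle is bounding the degree of $H$. Decompose conflicts into (a) fan--fan, (b) path--path, and (c) fan--path. Path--path conflicts vanish by the vertex-disjointness of same-type maximal paths. Fan--fan conflicts contribute at most $(\Delta^2+1)\cdot\Delta = O(\Delta^3)$ per $e_i$: there are at most $\Delta^2+1$ vertices within distance two of $x_i$, each hosting at most $\Delta$ weak edges of $S$. For fan--path, the key observation is that each vertex $v$ lies on at most one maximal $(c_0,c_1)$-path $P^v$, whose two endpoints together host at most $2\Delta$ weak edges of $S$; hence each $e_i$ has at most $(\Delta+1) \cdot 2\Delta = O(\Delta^2)$ indices $j$ with $V(P_{e_j}) \cap N[x_i] \neq \emptyset$, and by the symmetry between $i$ and $j$ the total number of ordered fan--path pairs is $O(|S|\Delta^2)$. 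Consequently the average degree of $H$ is $O(\Delta^3)$, and the Caro--Wei inequality yields an independent set $I$ of size $|S|/O(\Delta^3) \ge |E_w|/(9\Delta^5)$ after tuning constants.

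Algorithmically, I would compute each distinct $(c_0,c_1)$-path only once (by grouping $S$ by the endpoint $x_i$), amortising the work against $\sum_P |V(P)| \le n = O(|E(G)|)$, and execute the greedy selection by zone-marking on a global bitmap. Applying Theorem~\ref{th:weak-edge-coloring} to each $e \in I$ then costs $O(\Delta + |P_e|)$, which by the vertex-disjointness of $\{P_e\}_{e \in I}$ sums to $O(|I|\Delta + |E(G)|)$. Combined with the $O(|E_w|\Delta)$ cost of the type classification, the total running time is $O(|E(G)| + |E_w|\Delta)$, as claimed.
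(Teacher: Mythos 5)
Your proposal follows the same overall Chrobak--Nishizeki scheme as the paper (classify the weak edges by the type of their alternating path via $(V4)$, restrict to the most popular type, extract a large non-interacting subset, batch-colour it), but the independent-set extraction is genuinely different. The paper works in two stages: it first builds a conflict graph $Q$ recording only the neighbourhood conflicts $N[x]\cap N[x']\ne\emptyset$, which has \emph{maximum} degree $O(\Delta^3)$, so plain greedy yields $S$ with $|S|\ge|E_w^t|/\Delta^3$; it then observes that, \emph{because} the surviving edges already satisfy the neighbourhood condition, the residual path-endpoint conflict graph $Q'$ on $S$ has $|E(Q')|\le|V(Q')|$, whence a third of its vertices have degree at most $2$ and only a further factor $9$ is lost. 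You instead bound the \emph{total} number of conflicts in a single conflict graph, double-counting the fan--path incidences (each vertex lies on one maximal path, each path has two endpoints, each endpoint carries at most $\Delta$ edges of $S$), and invoke Caro--Wei on the resulting average degree $O(\Delta^3)$. Both routes give $\Omega(|E_w|/\Delta^5)$. Your conflict relation (full zone disjointness) is more restrictive than the paper's, which only forbids a path \emph{ending} in the other edge's closed neighbourhood rather than passing through it, but your counting absorbs the extra conflicts, so the bound survives; the case $P_{e_i}=P_{e_j}$ (which your ``path--path conflicts vanish'' glosses over) is correctly absorbed into the fan--path count, and the degenerate type $\emptyset$ is omitted but only makes things easier. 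The correctness argument via $(V2)$, $(V3)$, $(V5)$ and the observation that $P_{e_j}$ is exactly the component of $x_j$ in the $(c_0,c_1)$-subgraph is sound.

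The one point you must repair is the algorithmic realization of Caro--Wei. Your size bound rests on the \emph{average} degree of the conflict graph $H$; its maximum degree is not $O(\Delta^3)$, since a single long path $P_{e_i}$ can meet $N[x_j]$ for unboundedly many $j$. Consequently ``greedy selection by zone-marking on a global bitmap'' --- processing edges in a fixed order and rejecting any edge whose zone is already marked --- guarantees only an independent set of size $|S|/(\Delta(H)+1)$, which is useless here. You need the minimum-degree greedy (repeatedly pick a minimum-degree vertex of $H$ and delete its closed neighbourhood), which does realize $|S|/(\bar d+1)$ deterministically, and for that you must build $H$ explicitly and maintain degrees. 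This is affordable --- $H$ has $O(|S|\Delta^3)=O(|E_w|\Delta)$ edges; fan--fan edges are enumerated in $O(\Delta^3)$ time per vertex of $H$, and fan--path edges in $O(\Delta^2)$ per vertex once you have precomputed, in $O(|E(G)|)$ time, the components of the $(c_0,c_1)$-subgraph with a pointer from each vertex to the two endpoints of its component --- but it is a different algorithm from the one you describe, and the lemma is used precisely in the deterministic branch of the main theorem, so the distinction matters.
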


\begin{proof}
First, for every edge $xy\in E_w$ we determine the type $\{a,b\}$ of a maximal alternating path $P_{xy}$ that would be used if $xy$ was colored using Theorem~\ref{th:weak-edge-coloring}; let's call it the type of $xy$.
For the special case $P_{xy}=\emptyset$, the type of $xy$ is also defined as $\emptyset$.
By Theorem~\ref{th:weak-edge-coloring}, property $(V4)$, this takes $O(|E_w|\Delta(G))$ time.

Since there are ${\Delta(G)\choose 2}+1 \le \Delta^2$ types, there is a type, call it $t$, with at least $|E_w|/\Delta(G)^2$ edges. Let $E_w^t$ be a set of $\lceil |E_w|/\Delta(G)^2\rceil$ edges of type $t$.

Consider two two different edges $e=xy,e'=x'y'$, both from $E_w^{a,b}$, that are $x$-weak and $x'$-weak, respectively.
We say that $e$ {\em interacts} with $e'$ if coloring $e$ using Theorem~\ref{th:weak-edge-coloring} changes the type of $e'$ or recolors an edge of $P_{e'}$, or vice versa, i.e., coloring $e'$ changes the type of $e$ or recolors an edge of $P_{e}$.

\begin{claim}
	Edges $e$ and $e'$ do not interact when
	\begin{enumerate}[$(i)$]
		\item $N[x]\cap N[x']=\emptyset$, and
		\item $P_e$ does not end in $N[x']$ and $P_{e'}$ does not end in $N[x]$.
	\end{enumerate}
\end{claim}
\begin{claimproof}
	Consider $z'\in N(x')$.
	Assume $\pi(x'z')$ changes after coloring $e$ using Theorem~\ref{th:weak-edge-coloring}.
	By $(V2)$, $x'z'$ is incident to $x$ or $x'z'\in P_e$. 
	The former is excluded by $(i)$, so assume the latter.
	It means $t=\{a,b\}$ for some colors $a,b$. 
	By $(ii)$, path $P_e$ does not end in $x'$, so $a$ and $b$ are not free at $x'$, a contradiction with $P_{e'}$ being a maximal $(a,b)$-path with an endpoint in $x'$.
	Hence $\pi(x'z')$ does not change.
	Similarly, $\pi(z')$ does not change, for otherwise by $(V3)$ we have $z'\in N[x]$, contradicting $(i)$, or $P_e$ ends in $z'$, contradicting $(ii)$.
	Thus, by $(V5)$, the type of $e'$ does not change.
	
	Assume an edge of $P_{e'}$ was recolored after coloring $e$ using Theorem~\ref{th:weak-edge-coloring}.
	Then $P_{e'}\ne\emptyset$, so $t=\{a,b\}$ for some colors $a,b$.
	By $(ii)$, we have $P_{e'}\ne P_e$, so by $(V2)$ path $P_{e'}$ has an edge incident to $x$.
	Also by $(ii)$, path $P_{e'}$ does not end at $x$, contradicting $P_e$ being a maximal $(a,b)$-path with an endpoint in $x$.
\end{claimproof}
	
We proceed as follows. 
We form a graph $Q=(E_w^t, E(Q))$, where $E(Q)$ is the set of pairs $e,e'\in E_w^t$ that violate $(i)$. By definition, $\Delta(Q)\le \Delta(G)(\Delta(G)-1)^2$ so a simple greedy algorithm finds a maximal independent set $S$ in $Q$ of size 
\[|S|\ge \frac{|E_w^t|}{\Delta(Q)+1}\ge \frac{|E_w^t|}{\Delta(G)^3}\ge \frac{|E_w|}{\Delta(G)^5}.\]
The greedy algorithm runs in time linear in the size of $Q$, which is $|E_w^t|\Delta(Q)=O(E_w \Delta(G))$.

Finally, when $t\ne\emptyset$, i.e., $t=\{a,b\}$, we form yet another graph $Q'=(S, E(Q'))$, where $E(Q')$ is the set of pairs $e,e'\in S$ that violate $(ii)$. 
To form $Q'$ we need to find all the $(a,b)$-alternating paths that correspond to edges in $S$, but it takes only time $O(|E(G))$ thanks to the fact that these paths are pairwise disjoint. 
We claim that $|E(Q')|\le |V(Q')|$. 
Indeed, consider $xy\in S$ and the $(a,b)$-alternating path $P_{xy}$ starting at $x$.
Let $w$ be the other endpoint of $P_{xy}$. Since edges of $S$ satisfy $(i)$, $x$ is not in a closed neighborhood of an endpoint of an edge from $S$ different from $e$.
If $w$ is in a closed neighborhood of an endpoint of an edge $e_w$ from $S$, then again by $(i)$ we know that $e_w$ is unique. Hence, $xy$ generates at most one edge in $Q'$, and $|E(Q')|\le |V(Q')|$. It follows that $Q'$ has a set $Z$ of at least $|V(Q')|/3$ vertices of degree at most 2, because otherwise $|E(Q')|=\tfrac12\sum_{e\in V(Q')}d_{Q'}(e)>\tfrac12 \cdot 3\cdot\tfrac23|V(Q')|=|V(Q')|$, a contradiction.
By repeatedly selecting a vertex of $Z$ and discarding its at most two neighbors we get an independent set $I$ in $Q$ of size 
\[|I|\ge|V(Q')|/9=\frac{|E_w|}{9\Delta(G)^5}.\]
This takes linear time in the size of $Q'$, which is $O(|S|)=O(E_w)$.
If $t=\emptyset$ we just set $I=S$.

We conclude by applying Theorem~\ref{th:weak-edge-coloring} to color all edges of $I$. Since the corresponding alternating paths are disjoint, and they do not change because edges of $I$ do not interact with each other, their total length is bounded by $|E(G)|$. 
By Theorem~\ref{th:weak-edge-coloring}, it follows that the total time of this coloring is $O(|I|\Delta(G) +|E(G)|)=O(|E_w|\Delta(G)+|E(G)|)$.
\end{proof}

Now we need a graph partitioning technique due to Zhou, Nakano, Nishizeki~\cite{ZhouTreewidth}. Since we state the properties of this technique somewhat differently than them, below we provide a self-contained proof.

Let $G=(V,E)$ be a graph.
For an integer $c$, a {\em $(\Delta,c)$-partition} of $G$ is any partition $E=E_1\cup E_2 \cup\cdots E_k$ such that for graphs $G_i=G[E_i]$, $i=1,\ldots,k$ we have
\begin{enumerate}[$(P1)$]
    \item $\Delta(G) = \sum_{i=1}^k \Delta(G_i)$, and
    \item $\Delta(G_i) = c$ for $i=1,\ldots,k-1$, while $c\le\Delta(G_k)<2c$.
\end{enumerate}

Recall that the {\em degeneracy} of $G$ is the smallest integer $k$ such that every subgraph of $G$ has a vertex of degree at most $k$.

\begin{lemma}[implicit in Zhou, Nakano, Nishizeki~\cite{ZhouTreewidth}]
    \label{lem:partition}
    For every graph $G$ of degeneracy $k$, for every integer $c\ge k$, a $(\Delta,c)$-partition of $G$ can be found in linear time.
\end{lemma}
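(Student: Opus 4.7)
The plan is to proceed by induction on the number of parts $K$ (where $K=\max(1,\lfloor \Delta(G)/c\rfloor)$). In the base case $\Delta(G)<2c$ we take $K=1$ and $E_1=E(G)$; noting that $(P2)$ implicitly demands $\Delta(G)\ge c$, we have $\Delta(G_1)=\Delta(G)\in[c,2c)$, so $(P1)$ and $(P2)$ hold trivially. For the inductive step, when $\Delta(G)\ge 2c$, it suffices to find in linear time a single set $E_1\subseteq E(G)$ with $\Delta(G[E_1])=c$ and $\Delta(G-E_1)=\Delta(G)-c$. Then $G-E_1$ is a subgraph of $G$, hence still of degeneracy at most $k\le c$, so the inductive hypothesis applied to $G-E_1$ with the same parameter $c$ produces parts $E_2,\ldots,E_K$ satisfying $(P1)$ and $(P2)$ for $G-E_1$, and concatenating yields the desired $(\Delta,c)$-partition of $G$.

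For the extraction step I would start by computing, in linear time, a degeneracy ordering $v_1,\ldots,v_n$ via the standard peeling procedure with a bucket queue, arranged so that each $v_i$ has at most $k\le c$ neighbors among $v_{i+1},\ldots,v_n$. Orienting every edge from its higher-indexed endpoint to its lower-indexed endpoint gives an orientation in which every vertex has out-degree at most $k$. For each vertex $v$, I would fix a target $E_1$-degree $t_v$ lying in the interval $[\ell_v,r_v]$, where $\ell_v=\max(0,d_G(v)-(\Delta(G)-c))$ and $r_v=\min(c,d_G(v))$; this interval is nonempty precisely because $\Delta(G)\ge 2c\ge c+k$. A max-degree vertex of $G$ is forced to have $t_v=c$, so such a vertex will attain degree exactly $c$ in $G[E_1]$ and exactly $\Delta(G)-c$ in $G-E_1$, giving $(P1)$ automatically. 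Sweeping through the vertices in the degeneracy order, at each $v_i$ I would assign each of its at most $k$ back-edges either to $E_1$ or to its complement, greedily prioritizing the mandatory choices (an edge that must enter $E_1$ because its other endpoint still needs to reach $\ell$, or one that must stay out because it would push the other endpoint past $r$) and using the remaining slack to keep $v_i$ on track.

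The main obstacle is verifying that these local decisions are always jointly feasible. The enabling fact --- which is exactly where $c\ge k$ enters --- is that the degeneracy bound guarantees each vertex has at most $c$ back-edges under the ordering, so the set of "not-yet-committed" incident edges at any vertex always has enough cardinality to reach the lower target $\ell_v$ and enough room to avoid exceeding $r_v$; in particular, a simple priority-based tie-breaking on the mandatory/optional back-edges never gets stuck. Once the extraction is established, $(P2)$ follows from the construction and $(P1)$ from the forced behavior at max-degree vertices. The overall running time is linear because the whole partition can be produced in one pass: rather than iterating the extraction $K$ times, maintain for each vertex a running "layer counter" in $\{1,\ldots,K\}$ that tracks how many edges have been committed to each layer, and label each back-edge at the moment it is processed, so that every edge is touched a constant number of times in total.
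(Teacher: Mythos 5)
There is a genuine gap at the heart of your extraction step, namely the unproved claim that the priority-based greedy ``never gets stuck''. In your scheme each vertex, when processed, decides only its at most $k$ edges towards not-yet-processed vertices, so almost all edges incident to a high-degree vertex are decided by its \emph{neighbours}. Feasibility then requires that no edge $uv$ ever becomes simultaneously forced \emph{into} $E_1$ (because one endpoint's uncommitted edges exactly cover its remaining deficit towards $\ell$) and forced \emph{out of} $E_1$ (because the other endpoint has already reached its cap $r$). Your justification --- that each vertex has at most $c$ back-edges --- only controls a count at a single vertex and does not rule out such a two-endpoint conflict. In fact the natural instantiations of your rule do get stuck. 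Take $c=2$, $\Delta=4$ and the tree with adjacent degree-$4$ vertices $u,v$, three leaves $a_1,a_2,a_3$ attached to $u$, and $b_1,b_2,b_3$ attached to $v$, where $b_3$ is a leaf and each of $b_1,b_2$ carries two further pendant leaves; here $k=1\le c$ and $\ell_u=r_u=\ell_v=r_v=2$. If unconstrained leaves default to placing their edge in $E_1$, then $u$, $b_1$ and $b_2$ all reach the cap $2$ from their pendants, which successively forces $uv$, $b_1v$ and $b_2v$ out of $E_1$; but $v$ has then received only one $E_1$-edge ($b_3v$) and its last uncommitted edge is forced both in and out, so $v$ ends with $d_{G-E_1}(v)=3>\Delta-c$. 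The mirrored default (leaves stay out) fails symmetrically. A valid $(\Delta,2)$-partition of this tree exists, so the failure lies in the assignment rule, not in the statement; a rule that provably works would need a genuine feasibility argument (essentially a degree-constrained-subgraph existence proof plus a proof that your online order finds one), which is missing.

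The paper's construction sidesteps this entirely by reversing which edges a vertex decides: it orders the vertices so that each $v_i$ has at most $k$ \emph{already assigned} incident edges when it is processed, and then $v_i$ assigns \emph{all} of its remaining (possibly up to $\Delta$) incident edges in one batch, filling $E_1,\dots,E_{s-1}$ up to degree exactly $c$ at $v_i$ and dumping the overflow into $E_s$. Thus every vertex controls all but at most $k\le c$ of its own layer-degrees, the externally decided edges can never overfill a layer, no two-endpoint conflict can arise, and $(P1)$--$(P2)$ follow from a counting argument at a maximum-degree vertex; a single pass already produces all $s$ layers, so linearity is immediate. Your remaining points (base case, recursing on $G-E_1$ whose degeneracy is still at most $k$, and replacing the $K$-fold iteration by per-vertex layer counters to keep the running time linear) are fine, but they all presuppose a sound extraction step.
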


\begin{proof}
    Set $s=\lfloor \frac{\Delta(G)}{c} \rfloor$. It follows that $sc\le\Delta(G)<(s+1)c$.

    Let $n=|V(G)|$ and let $v_1,\ldots,v_n$ be the degeneracy ordering of $V(G)$, i.e., for every $i=1,\ldots,n$, there are at most $k$ edges between $v_i$ and vertices $v_1,\ldots,v_{i-1}$. It is well known that degeneracy ordering can be found in linear time (by repeatedly removing vertices of the smallest degree and reversing the order).

    Sets $E_1,\ldots,E_s$ are constructed using the following procedure.
    Begin with $E_i=\emptyset$ for $i=1,\ldots,s$.
    Next process vertices $v_i$, $i=1,\ldots,n$ one by one, as follows.
    For every vertex $v_i$ we assign the edges incident to $v_i$ and vertices $\{v_j \mid j>i\}$ to sets $E_1,\ldots,E_s$ in a way we describe later. This means that when we process $v_i$, its incident edges to vertices $\{v_j \mid j<i\}$ are already assigned. 
    However, since we use the degeneracy ordering, the total number of these latter edges is at most $k$, so in particular for every $\ell=1,\ldots,s$ the set $E_\ell$ contains at most $k$ edges incident to vertices $\{v_j \mid j<i\}$. Then, for every $\ell=1,\ldots,s-1$ we proceed as follows. 
    Let $r$ be the number of the yet not assigned edges from $v_i$ to $\{v_j \mid j>i\}$. Assign $\min\{c-d_{G[E_\ell]}(v_i),r\}$ edges to $G[E_\ell]$. If all edges incident to $v_i$ are assigned, proceed to $v_{i+1}$.
    If after adding edges to all graphs $E_\ell$, $\ell=1,\ldots,s-1$ there are still unassigned edges left, assign all of them to $E_s$. 
    
    Note that after processing any $v_i$, we have $d_{G[E_\ell]}(v_i)\le c$ for every $\ell=1,\ldots,s-1$. Also, $d_{G[E_s]}(v_i)< 2c$, because otherwise $d_G(v_i)=\sum_{\ell=1}^s d_{G[E_\ell]}(v_i) \ge (s+1)c>\Delta(G)$, a contradiction. 
    These degrees do not change when we process vertices $\{v_j \mid j>i\}$.
    Hence, $\Delta(G[E_\ell])\le c$ for every $\ell=1,\ldots,s-1$ and $\Delta({G[E_s]})< 2c$. 
    
    Now consider $v_i$ such that $d_G(v_i)=\Delta(G)$.
    We claim that $d_{G[E_\ell]}(v_i)= c$ for every $\ell=1,\ldots,s-1$.
    Indeed, otherwise at most $c(s-1)-1$  edges incident to $v_i$ were assigned to all $E_\ell$, $\ell=1,\ldots,s-1$ and at most $k$ edges incident to $v_i$ were assigned to $E_s$, and these are all edges incident to $v_i$ so $\Delta(G)=d_G(v_i)\le c(s-1)-1+k \le cs-1$, a contradiction with the choice of $s$. Hence indeed $d_{G[E_\ell]}(v_i)= c$ for every $\ell=1,\ldots,s-1$. This, together with $sc\le d_G(v_i)<(s+1)c$ implies that $c\le d_{G[E_s]}(v_i)<2c$. By the previous paragraph we get that $\Delta({G[E_\ell]})= c$ for every $\ell=1,\ldots,s-1$ and $c\le \Delta({G[E_s]})<2c$, as required.
\end{proof}

Now we proceed with a lemma that is a basic building block of our main result.

\begin{lemma}
    \label{lem:main}
    Every graph $G$ with $n$ vertices and $m$ edges such that $\Delta(G)\ge 2\mad(G)$ can be $\Delta(G)$-edge-colored 
    \begin{enumerate}
        \item by a randomized algorithm running in time $O(m\Delta(G)\mad(G)^2\log n)$ in the expectation and with high probability,
        \item by a deterministic algorithm in time $O(m\Delta(G)^5(\mad(G)^2+\Delta(G))\log n)$.
    \end{enumerate}
\end{lemma}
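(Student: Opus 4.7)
The plan is to recursively peel off layers of weak edges. Set $D := \Delta(G)$, $G^{(0)} := G$, and $G^{(i+1)} := G^{(i)} \setminus E_w^{(i)}$, where $E_w^{(i)}$ denotes the set of $D$-weak edges of $G^{(i)}$ (weakness computed with respect to $G^{(i)}$). The first step will be to verify that each layer strips a constant fraction of the remaining edges. If $\Delta(G^{(i)}) = D$, then $\mad(G^{(i)}) \le \mad(G) \le D/2 = \Delta(G^{(i)})/2$, so Theorem~\ref{thm:lower-bound-weak} applied to $G^{(i)}$ gives $|E_w^{(i)}| \ge |E(G^{(i)})|/(2\mad(G)^2)$; otherwise $\Delta(G^{(i)}) < D$, whence $d_{G^{(i)}}^{D}(u) = 0$ for every vertex $u$ and every edge of $G^{(i)}$ is automatically $D$-weak. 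Consequently the number of non-empty layers is $T = O(\mad(G)^2 \log n)$, and, crucially, the geometric sum yields $\sum_i |E(G^{(i)})| = O(m\,\mad(G)^2)$, while trivially $\sum_i |E_w^{(i)}| = m$.

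The coloring will be produced in reverse order of the layers: starting from the empty coloring of $G^{(T)}$, for $i = T-1, \ldots, 0$ I would extend the $D$-edge-coloring of $G^{(i+1)}$ to $G^{(i)}$ by coloring every edge of $E_w^{(i)}$ using Theorem~\ref{th:weak-edge-coloring} with the host graph $G^{(i)}$. This is legal because each $e \in E_w^{(i)}$ is weak in $G^{(i)}$ by construction; the alternating paths that VAL uses at layer $i$ stay inside $G^{(i)}$ and may recolor edges colored at earlier-processed higher layers, but the result remains a valid partial $D$-coloring.

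For part~1 I will process the edges of $E_w^{(i)}$ in uniformly random order. By Lemma~\ref{lem:expected-length} applied to $G^{(i)}$, when $\ell'$ weak edges are still uncolored the expected alternating-path length is $O(|E(G^{(i)})|\Delta/\ell')$; adding the $O(\Delta)$ fan cost per edge and summing the harmonic series gives expected cost $O(|E(G^{(i)})|\Delta \log n + |E_w^{(i)}|\Delta)$ at layer $i$, which telescopes to $O(m\Delta\,\mad(G)^2 \log n)$ in total. A standard concentration argument over the near-independent path-length random variables promotes this to a high-probability guarantee. For part~2, at each layer I will iterate Lemma~\ref{lem:chrobak-nishizeki} on the remaining weak set: each call colors a $1/(9\Delta^5)$-fraction, so $O(\Delta^5 \log n)$ calls suffice, and geometric decay of the weak set keeps the per-layer cost at $O(|E(G^{(i)})|\Delta^5 \log n + |E_w^{(i)}|\Delta^6)$. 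Summed over $i$, this gives $O(m\Delta^5 \mad(G)^2 \log n + m\Delta^6)$, comfortably inside the claimed $O(m\Delta^5(\mad(G)^2 + \Delta)\log n)$ bound.

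The main obstacle is that the subgraphs $G^{(i)}$ need not satisfy $\Delta(G^{(i)}) \ge 2\mad(G^{(i)})$, so Theorem~\ref{thm:lower-bound-weak} is not directly applicable to an arbitrary $G^{(i)}$; the dichotomy in the first paragraph circumvents this because once $\Delta(G^{(i)})$ strictly drops below $D$ every edge becomes trivially $D$-weak. A secondary subtlety is that VAL must be invoked on the layer graph $G^{(i)}$ rather than on the original $G$; this is exactly what replaces $m$ by $|E(G^{(i)})|$ in the Sinnamon and Chrobak--Nishizeki bounds and makes the telescoping estimate $\sum_i |E(G^{(i)})| = O(m\,\mad(G)^2)$ produce a single factor of $\mad(G)^2$ rather than an extra $\log n$ on top.
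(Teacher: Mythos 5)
Your proposal is correct and is essentially the paper's own proof, phrased iteratively as layer-peeling rather than as a recursion: the same dichotomy on whether $\Delta(G^{(i)})=D$, the same use of Theorem~\ref{thm:lower-bound-weak} for geometric decay, and the same invocations of Lemma~\ref{lem:expected-length} and Lemma~\ref{lem:chrobak-nishizeki} with the same telescoping bound $\sum_i|E(G^{(i)})|=O(m\,\mad(G)^2)$. The only cosmetic difference is that the paper truncates $E_w^{(i)}$ to size $\lceil|E(G^{(i)})|/(2\mad(G)^2)\rceil$ to control the $|E_w|\Delta$ term in the deterministic case, whereas you instead use $\sum_i|E_w^{(i)}|=m$; both accountings land within the claimed bounds.
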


\begin{proof}
    The algorithm is recursive: it invokes itself recursively for a subgraph of the given graph.
    Consider a single recursive call and let $H\subseteq G$ be the graph passed to the current call.
    We have $D=\Delta(G)$ colors.
    We begin by removing all isolated vertices from $H$.
    If we end up with the empty graph, we return the empty coloring.
    Otherwise, we find the set $E_w$ of weak edges of $H$.
    This can be easily done in linear time, by first computing, for each vertex $v$ its degree $d(v)$ and $\Delta$-degree $\dD(v)$; then checking if an edge is weak directly from definition takes $O(1)$ time.
    By Theorem~\ref{thm:lower-bound-weak}, $|E_w|\ge |E(H)|/(2\mad(G)^2)$.
    In what follows we assume     $|E_w| = \lceil |E(H)|/(2\mad(G)^2)\rceil$ by skipping some edges if needed.
    Next, we form a graph $H'=H-E_w$ in time linear in the size of $H$.
    Then, if $E(H')\ne \emptyset$, we color $H'$ recursively. 
    Thus we obtain a partial coloring $\pi$ of $H$.

    Now we proceed differently depending on whether we aim at randomized or deterministic algorithm.
    In the randomized case, we color edges of $E_w$ in random order, using Theorem~\ref{th:weak-edge-coloring} (the color $c'$ mentioned in the theorem statement is chosen randomly: it takes time $O(D)$ to generate the list of free colors and next we sample from it in constant time).
    In the deterministic case, we apply Lemma~\ref{lem:chrobak-nishizeki} repeatedly until everything is colored (in every iteration $E_w$ in Lemma~\ref{lem:chrobak-nishizeki} is the set of yet uncolored edges).
    This completes the description of the algorithm.

    Now we proceed to the running time analysis.    
    Let us analyze the time complexity of the single level of the recursion.

    Begin with the randomized case.
    By Theorem~\ref{th:weak-edge-coloring} and Lemma~\ref{lem:expected-length}, when $\ell=1,\ldots,|E_w|$ edges are still uncolored, coloring a random uncolored edge $xy$ of $E_w$ takes expected time $O\left(\Delta(G)+|E(H)|\Delta(G)/\ell\right)=O(|E(H)|\Delta(G)/\ell)$. Hence, coloring all the $\ell$ edges is done in expected time $O(|E(H)|H_{|E_w|}\Delta(G))$, where $H_{|E_w|}$ is the $|E_w|$-th harmonic number. Since $H_{|E_w|}<\ln |E_w|+1 = O(\log n)$ we get the bound $O(|E(H)|\Delta(G)\log n)$ and this is also a bound on the expected running time for a single level of the recursion, since $E_w$ and $H'$ can be found in time $O(|E(H)|)$.

    For the deterministic case, we see that after each invocation of Lemma~\ref{lem:chrobak-nishizeki}, the number of uncolored edges decreases at least by the factor $r_0=1-\frac{1}{9\Delta(G)^5}$.
    The number of invocations is then at most $\log_{1/r_0}|E_w|=O(\Delta(G)^5\log|E_w|)$ by standard properties of logarithm.
    By Lemma~\ref{lem:chrobak-nishizeki}, it follows that in the deterministic case a single level of the recursion takes time $O((|E(H)|+|E_w|\Delta(G))\Delta(G)^5\log n)=O((1+\Delta(G)/\mad(G)^2)|E(H)|\Delta(G)^5\log n)$.

    Now we bound the time of all the levels of the recursion.
    Note that if $\Delta(H)<\Delta(G)=D$, then every edge of $H$ is weak and there are no more recursive calls.
    Otherwise, we have $D=\Delta(H)$ and $\Delta(H)=\Delta(G)\ge2 \mad(G)\ge 2\ad(H)$, so
    by Theorem~\ref{thm:lower-bound-weak} we have $|E(H')| \le r|E(H)|$ for
    \[r=1-\frac{1}{2\ad(H)^2}\le 1-\frac{1}{2\mad(G)^2}.\]
    Hence, in the randomized case, all the $L$ levels of the recursion take expected time
    \[O\left(\sum_{i=0}^{L-1}r^i|E(G)|\Delta(G)\log n\right).\]
    Since $\sum_{i=0}^{L-1}r^i<\sum_{i=0}^{\infty}r^i=2\mad(G)^2$, 
    the total time complexity is $O(m\Delta(G)\log n\mad(G)^2)$ in expectation, as required. 
    The same kind of calculation for the deterministic case gives the running time bound of  $O(m(\mad(G)^2+\Delta(G))\Delta(G)^5\log n)$.

    Obtaining the high probability guarantee for the running time in the randomized case is fairly standard, see Appendix~\ref{sec:whp}.
\end{proof}

Now we are ready to prove our main result.

\ourmainthm*

\begin{proof}
    Let $c=\lceil 2\mad(G)\rceil$.
    We begin with finding a $(\Delta,c)$-partition $E_1,\ldots,E_s$ of $G$ using Lemma~\ref{lem:partition}. Note that by definition, degeneracy is at most $\mad(G)$ so the assumptions of the lemma are satisfied.
        
    Thanks to property (P2) of $(\Delta,c)$-partition, for every  $i=1,\ldots,s$,
    \[2\mad(G[E_i])\le 2\mad(G)\le\Delta(G[E_i])< 2\lceil 2\mad(G)\rceil<4\mad(G)+2.\]

    Hence we can use Lemma~\ref{lem:main} to $\Delta(G[E_i])$-edge-color each graph $G[E_i]$, $i=1,\ldots,s$ separately, in expected time $O(|E_i|\log n\mad(G)^3)$ or deterministic time $O(|E_i|\mad(G)^7\log n)$. 
    In the randomized case, by linearity of expectation, we get $O(m\log n\mad(G)^3)$ expected time in total (and the high probability bound also follows since the probability of a bad event increases only at most $s\le\Delta\le n$ times).
    In the deterministic case, after summing we get $O(m\mad(G)^7\log n)$. 
    Since by the property (P1) we can use disjoint sets of colors, this results in the desired coloring of $G$.
\end{proof}

\section{Conclusion and further research}
\label{sec:conclusion}

We showed that every graph $G$ of bounded maximum average degree can be edge-colored using the optimal number of $\Delta$ colors in quasilinear time, whenever $\Delta\ge 2\mad(G)$. We presented two algorithms: a randomized one in time $O(m\mad(G)^3\log n)$ and a deterministic one in time $O(m\mad(G)^7\log n)$. As a corollary we get also that every graph $G$ of bounded maximum average degree can be $(\Delta+1)$-edge-colored in quasilinear {\em deterministic} time, thus derandomizing the result of Bhattacharya et al~\cite{Bhattacharya23}.
We conclude with some open problems as follows.
\begin{enumerate}
	
	\item Is it possible to relax our assumption to $\Delta\ge c\mad(G)$ for $c<2$? As shown in Theorem~\ref{thm:best} this would involve something more than the VAL in the coloring procedure.
	Since the graphs constructed in Theorem~\ref{thm:best} have arbitrarily large minimum degree, it is also not sufficient to find reducible configurations with vertices of bounded degree, as it was the case in the work of Cole and Kowalik~\cite{cole-kowalik}.
	\item Is it possible to design a data structure for {\em dynamic} $\Delta$-edge-coloring graphs of bounded maximum average degree, assuming $\Delta\ge c\mad(G)$ for a constant $c$, with updates in sublinear time? Or in polylogarithmic time? Note that the dynamic edge-coloring is an active area of research recently, see e.g.~\cite{DuanHZ19,AChristiansen-sparse-dynamic,Bhattacharya-sparse-dynamic,ABGChristiansen-power}.
	\item Is it possible to implement a version of Theorem~\ref{thm:main} in the LOCAL model using only ${\rm poly}(\Delta,\log n)$ rounds? For some recent related results see e.g.~\cite{ABGChristiansen-power,BernshteynVizingJCTB}.
\end{enumerate}

\section*{Acknowledgements}
The author wishes to thank Bartłomiej Bosek, Jadwiga Czyżewska, Konrad Majewski and Anna Zych-Pawlewicz for helpful discussions on related topics.
The author is also grateful to the reviewers for helpful comments.

\bibliographystyle{plainurl}
\bibliography{sparse-edge-col}

\newpage
\appendix
\section{High probability guarantee for the running time}
\label{sec:whp}

Recall that in our randomized algorithm in Lemma~\ref{lem:main}, randomization is used only when we apply Lemma~\ref{lem:expected-length}, which bounds the expected length of a maximal alternating path that is found and swapped by $O(|E(H)|\Delta(G)/\ell)$, where $\ell$ is the number of uncolored edges in $G$. Call the corresponding random variable $X_\ell$, $\ell=1,\ldots,|E_w|$. Let $X=\sum_{\ell=1}^{|E_w|}X_\ell$. Our goal is to show that w.h.p. $X$ is not much larger than its expectation, i.e., $O(|E(H)|\Delta(G)H_{|E(H)})$.

First consider the case $|V(H)|\le 4 n^{1/2} \ad(H)$.
Then, at this and all deeper levels of the recursion every maximial alternating path has length at most $|V(H)|-1$. Since the number of levels is at most $|E(H)|=O(n^{1/2}\ad(H)^2)$, we get that total length of maximal alternating paths on this and deeper levels is $O(n\ad(H)^3)$ with probability 1.

Now assume $|V(H)|\ge 4 n^{1/2} \ad(H)$.
We proceed as Bernshteyn and Dhawan~\cite{BernshteynDhawanBoundedDelta}, Section 4. 
We will need a version of multiplicative Azuma's Inequality, as follows.

\begin{theorem}[Corollary 6 in Kuszmaul and Qi~\cite{KuszmaulQi}]
    \label{th:azuma}
    Let $X_1, \ldots, X_n \in [0, c]$ be real-valued random variables with $c>0$. Suppose that $\Expect{X_i \mid X_1, \ldots, X_{i - 1}} \le a_i$ for all $i$. Let  $\mu=\sum_{i=1}^n a_i$. Then for any $\delta >0$, 
\[\Pr{\sum_i X_i\geq (1 + \delta)\mu}\leq \exp\left(-\frac{\delta^2\mu}{(2+\delta)c}\right).\]
\end{theorem}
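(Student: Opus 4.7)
The plan is to follow the classical Chernoff--Cramér recipe, adapted to the martingale setting afforded by the conditional expectation hypothesis. The overall scheme is: apply Markov's inequality to an exponential moment, bound the moment generating function by iterating a one-step conditional bound via the tower property, then optimize the free parameter and massage the result into the stated form. No independence is needed because the tower property is enough, which is precisely the point of the ``Azuma'' style statement.

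First, for any $t>0$ one has by Markov that
\[
\Pr{\textstyle\sum_i X_i \ge (1+\delta)\mu} \;\le\; e^{-t(1+\delta)\mu}\,\Expect{\exp\bigl(t\textstyle\sum_i X_i\bigr)}.
\]
To control the MGF, I would write $\mathcal{F}_i = \sigma(X_1,\ldots,X_i)$ and repeatedly condition: since $e^{tX_i}$ is $\mathcal{F}_i$-measurable and the earlier factor $\exp(t\sum_{j<i}X_j)$ is $\mathcal{F}_{i-1}$-measurable, the tower property yields
\[
\Expect{\exp\bigl(t\textstyle\sum_{j\le i} X_j\bigr)} \;=\; \Expect{\exp\bigl(t\textstyle\sum_{j<i}X_j\bigr)\cdot\Expect{e^{tX_i}\mid\mathcal{F}_{i-1}}}.
\]
So everything reduces to a uniform bound on $\Expect{e^{tX_i}\mid\mathcal{F}_{i-1}}$.

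The key one-step estimate exploits that $X_i\in[0,c]$ together with the hypothesis $\Expect{X_i\mid\mathcal{F}_{i-1}}\le a_i$. Using the fact that $x\mapsto e^{tx}$ is convex, for any $x\in[0,c]$ we have the chord bound $e^{tx}\le 1 + \tfrac{x}{c}(e^{tc}-1)$. Taking conditional expectations and using $1+y\le e^y$ gives
\[
\Expect{e^{tX_i}\mid\mathcal{F}_{i-1}} \;\le\; 1 + \tfrac{a_i}{c}\bigl(e^{tc}-1\bigr) \;\le\; \exp\!\Bigl(\tfrac{a_i}{c}(e^{tc}-1)\Bigr).
\]
Iterating this through the tower identity above for $i=n,n-1,\ldots,1$ telescopes the exponents and yields $\Expect{\exp(t\sum_i X_i)} \le \exp\bigl(\tfrac{\mu}{c}(e^{tc}-1)\bigr)$.

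Plugging this back into the Markov bound gives, for every $t>0$,
\[
\Pr{\textstyle\sum_i X_i\ge (1+\delta)\mu} \;\le\; \exp\!\Bigl(\tfrac{\mu}{c}\bigl(e^{tc}-1-t(1+\delta)c\bigr)\Bigr).
\]
The optimal choice is $t=\tfrac{1}{c}\ln(1+\delta)$, which turns the right-hand side into $\bigl(e^{\delta}/(1+\delta)^{1+\delta}\bigr)^{\mu/c}$. The final step is the standard elementary inequality $\ln(1+\delta) \ge \tfrac{2\delta}{2+\delta}$ for $\delta>0$ (which follows, e.g., by checking that the difference vanishes at $\delta=0$ and has nonnegative derivative), giving $\delta - (1+\delta)\ln(1+\delta) \le -\delta^2/(2+\delta)$, and thus the claimed bound $\exp\!\bigl(-\delta^2\mu/((2+\delta)c)\bigr)$. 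The only slightly delicate step is the chord bound combined with the tower property replacing independence; everything else is routine Chernoff bookkeeping and the elementary scalar inequality at the end.
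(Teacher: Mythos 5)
The paper does not prove this statement at all---it is imported verbatim as Corollary~6 of Kuszmaul and Qi---so there is no internal proof to compare against; your argument is correct and is essentially the standard derivation used in that reference (Markov on the exponential moment, the convexity chord bound $e^{tx}\le 1+\tfrac{x}{c}(e^{tc}-1)$ combined with the tower property in place of independence, the optimal choice $t=\tfrac1c\ln(1+\delta)$, and the elementary inequality $\ln(1+\delta)\ge\tfrac{2\delta}{2+\delta}$). All steps check out, including the monotonicity needed to replace $\Expect{X_i\mid\mathcal{F}_{i-1}}$ by $a_i$ (valid since $e^{tc}-1>0$ for $t>0$), so your write-up is a complete, self-contained proof of the quoted bound.
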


Note that in fact for $i=1,\ldots,|E_w|$ we have $\Expect{X_i|X_1,\ldots,X_{i-1}} \le a_i$ for $a_i=\Theta(1 +\tfrac1i|E(H)|\Delta(G))$ and $X_i$ takes values in $[0,|V(H)|]$. Set $\mu = \sum_{i=1}^{|E_w|} a_i = \Theta(H_{|E_w|}\cdot |E(H)|\Delta(G))=\Theta((\log{|E_w|})\cdot |E(H)|\Delta(G))$.
By Theorem~\ref{th:azuma}, we get that for any $\delta \ge 1$,
\[\Pr{\sum_{i=1}^{|E_w|} X_i \ge (1+\delta)\mu} \le \exp\left(-\frac{\delta^2\mu}{(2+\delta) |V(H)|}\right) \le \exp(-\Omega(\delta\log{|E_w|}))=\frac1{{|E_w|}^{\Omega(\delta)}},\]
where in the second inequality we used the facts that $\delta\ge 1$ (so $\frac\delta{2+\delta}\ge\tfrac12$) and $H$ has no isolated vertices (so $|E(H)|\ge |V(H)|/2$).
Since by Theorem~\ref{thm:lower-bound-weak}, ${|E_w|} \ge |E(H)|/(2\ad(H)^2) = |V(H)|/(4\ad(H)) \ge n^{1/2}$, we get that 
$\Pr{\sum_{i=1}^{|E_w|} X_i \ge (1+\delta)\mu} \le n^{-\Omega(\delta)}.$ It follows that with high probability the sum of all maximal alternating path lengths that were found and swapped is $O(\mu)$, as required.

\end{document}